\lstdefinelanguage{abs}{keywords=
	{module,from,Fut,import,put,get,do,od,data,type,def,case,of,class,interface,extends,implements,inherits,if,else,await,delegate,return,suspend,skip,while,some,new,return,then,fi,var,and,or,inv,post,pre,where,others,h,local,respects,with,Any,op,var,out}, 
	sensitive=true,
	morecomment=[l]{//},
	morestring=[b]"}
\lstdefinelanguage{abs-nonumber}{keywords=
	{module, from, import,Int,put,get,do,od,data,type,def,case,of,class,interface,extends,implements,inherits,if,else,await,delegate,return,
		suspend,skip,while,some,new,return,then,fi,var,and,or,inv,post,pre,where,others,h,local, respects, with,Any,op,var,out}, 
	sensitive=true,
	morecomment=[l]{//},
	morestring=[b]"}
\lstdefinelanguage{maude}{keywords=
	{sort,op, eq, crl, ceq, vars, var, subsorts, subsort, sorts}, 
	sensitive=true,
	morecomment=[l]{//},
	morestring=[b]"}
\colorlet{shadecolor}{gray!30}
\newcommand{\many}[1]                              {\overline{#1}}
\newcommand{\kw}[1] {\texttt{#1}} 
\newcommand{\dra}{\im{\dra}}
\newcommand{\IFLONG}[1]{}
\newcommand{\transition}[1]{\xrightarrow{}}
\newcommand{\Equationn}[1]{={~}}
\newcommand{\OPRULE}[1]{\textsf{#1}}
\newcommand{\mvar}{\it} 
\def\myttsize{\fontsize{10}{12}}
\newcommand{\mm}[1]{\overline{#1}}
  \newcommand{\Blue}[1]{\textcolor{blue}{#1}}
  \definecolor{airforceblue}{rgb}{0.36, 0.54, 0.66}
  \definecolor{bistre}{rgb}{0.24, 0.17, 0.12}
  \definecolor{brown(traditional)}{rgb}{0.59, 0.29, 0.0}
  \definecolor{darkbrown}{rgb}{0.4, 0.26, 0.13}
  \newcommand{\red}[1]{\textcolor{red}{#1}}
  \newcommand{\Red}[1]{\textcolor{red}{#1}}
  \newcommand{\cut}[1]{}%
\newcommand{\im}[1]{\ensuremath{#1}}
\renewcommand{\transition}{\xrightarrow{~~}}
\newcommand{\on}{Flw: \red{\checkmark}}
\newcommand{\els}{\mathit{else}\ }
\newcommand{\newfId}{\mathit{nfId}}
\newcommand{\und}{\mathit{\_}}
\newcommand{\ifs}{\mathit{if}\ }
\newcommand{\Cl}{\mathit{Cl}}    
\newcommand{\Att}{\mathit{Att}} 
\newcommand{\Mtds}{\mathit{Mtds}}  
\newcommand{\Prr}{\mathit{Pr}}  
\newcommand{\sll}{\mathit{sl}}    
\newcommand{\Cnt}{\mathit{Cnt}}     
\newcommand{\Lvl}{\mathit{Lvl}}      
\newcommand{\lev}{\mathit{lev}}        
                                                                                                                                                                   \newcommand{\PC}{\mathit{Pc}}                                                                                                                                                              \newcommand{\pcs}{\mathit{pcs}}                                                                                                                                                                     \newcommand{\pc}{\mathit{pc}}                                                                                                                                                                        \newcommand{\Qu}{\mathit{Qu}}  
\newcommand{\Ev}{\mathit{Ev}}  
\renewcommand{\mm}{\mathit{mm}}    
\newcommand{\invocc}{\mathit{invc}}
\newcommand{\fId}{\mathit{fId}}   
\newcommand{\dl}{\mathit{dl}}                                                                                                                                                                        \newcommand{\fut}{\mathit{fut}}                                                                                                                                                                       \newcommand{\Wrr}{\mathit{Wr}}  
\newcommand{\wId}{\mathit{wId}}                                                                                                                                                                    \newcommand{\newId}{\mathit{newId}}                                                                                                                                                              \newcommand{\new}{\mathit{new}} 
\newcommand{\eval}{\mathit{eval}} 
\newcommand{\level}{\mathit{level}}    
\newcommand{\comp}{\mathit{comp}}
\newcommand{\ignore}[1]{}
\begin{document}
\title{ Security Wrappers for Information-Flow Control in Active Object Languages with Futures}
\author{Farzane Karami \inst{1} \and Olaf Owe \inst{1} \and Gerardo Schneider \inst{2}}
 \institute{ Dept.~of Informatics, University of Oslo \email{\{farzanka,olaf\}@ifi.uio.no}  \and Dept.~of Computer Science and Eng.,  Chalmers University of Technology \email{gerardo@cse.gu.se}}
\maketitle
\begin{abstract}
This paper introduces a run-time mechanism
for preventing leakage of secure information in distributed systems.
We consider a general concurrency language model,
where concurrent objects interact by asynchronous method calls and futures.
The aim is to prevent leakage of confidential information to low-level viewers. 
The approach is based on the notion of a \emph{security wrapper}, which encloses 
an object or a component and
controls its interactions with the environment.
A wrapper is a mechanism added by the run-time system
to provide protection of an insecure component according to some security policies.
The security policies of a wrapper are formalized based on 
a notion of security levels.
At run-time, future components will be wrapped upon need, 
while  only objects of \emph{unsafe classes} will be wrapped,
using 
static checking 
to limit the number of unsafe classes and thereby reducing run-time overhead. 
We define an operational semantics and prove that non-interference is
satisfied.
A service provider may use wrappers to protect its services 
in an insecure 
environment, 
and vice-versa: a system platform may use 
wrappers to protect itself 
from insecure service providers.
\keywords{Active objects \and Futures \and Information-flow security \and Non-interference \and Dynamic analysis \and Static analysis \and Distributed systems}
\end{abstract}

%
%
\section{Introduction}
Given the large number of users and service providers involved in a distributed system, security is a critical concern.
It is thus
essential to analyze and control how confidential information propagates
between nodes.
When a program executes, it might leak secure information to
public outputs or send it to malicious nodes.
\emph{Information-flow control} approaches track
 how information propagates through the program during execution and prevent leakage of secure information~\cite{sabelfeld2003language}. 
 Program variables are tagged typically with security levels, for example, \textit{high} and \textit{low} to indicate secure and public data, and
the semantic notion of information-flow security is based on 
\textit{non-interference}~\cite{goguen1982security}.
This
means that
in any two executions,
when a program is run with a
variation of high inputs but the same low input values,
the low outputs will be the same (at least for locally deterministic programs).
This way, an attacker cannot see any variation between these two executions
since low outputs are independent of the high inputs. 
It is assumed that an attacker is a low level object that is communicating with the system.
 
\textit{Active object languages} are concurrent programming languages suitable for designing distributed and object-oriented systems.
The goal is to design an efficient, permissive, and precise
security mechanism that can be applied to these languages, supporting
concurrency and communication paradigms like \textit{futures}.
A future is a read-only component, which is created by a remote method call and
eventually will contain the corresponding return value~\cite{baker1977incremental}.
Therefore, the caller needs not
block while waiting to get the return value:
it can continue with other tasks and later get the value from the corresponding future.
Futures can be passed to other objects, called \emph{first-class futures}.
In this case, any object that has a reference to a future can access its content,
which may be a security threat if the future contains secure data.
Futures offer
a flexible way of communication and sharing results,
but handling them appropriately, in order to avoid security leakages, is challenging~\cite{KARAMI2019154}.

Our security mechanism is inspired by the notion of
\textit{wrappers} in~\cite{owe2009wrap}, where a wrapper encloses an object and enforces safety rules.
In this paper,
we suggest a notion of \emph{security wrapper}, which
wraps an object or a future
at run-time and performs security controls. 
Such wrappers are added to a high-level concurrent core language, supporting asynchronous method calls and futures.
Wrappers are \emph{invisible} in the sense that a wrapped  component  and its environment are not aware of it.
Security wrappers block illegal object communications, leading to leakage of confidential data to unauthorized object.
Moreover, they 
protect futures if they contain high return values, 
preventing illegal access by lower level objects.
The security policies and semantics of a wrapper are defined based on run-time security levels, which are resulted from a flow-sensitive and dynamic information-flow enforcement.
Therefore,
our dynamic approach guarantees some levels of permissiveness
and is precise since it deals with the exact run-time security levels.
The operational semantics of our security framework is provided in the style of Structured Operational Semantics (SOS), modeled in the Maude system~\cite{duran2007all}, giving an executable interpreter for programs written in our language. 
In order to have less run-time overhead, we suggest static
analysis to identify where security checking and wrappers are needed since
often only a few methods deal with secure information.
Assuming a sound static analysis, we prove that our proposed security mechanism
ensures the non-interference property in object communications.

In summary, our contributions are: i) a high-level core language as a model for service-oriented and distributed systems (Sec~\ref{language-syntax}), ii) a built-in notion of security wrappers for enforcing noninterference and security control in object interactions (Sec~\ref{our language}), for which we use static analysis to reduce the run-time
overhead (Sec~\ref{static}), and provide an executable operational semantics
for the dynamic information-flow enforcement and wrappers (Secs~\ref{op}, \ref{wrapper}), and iii) an outline of the proof that our model satisfies non-interference (the details are found in
the appendix Sec~\ref{appendix}).
\section{Background}
Information-flow control approaches detect illegal flows.
During program execution,
there are two kinds of leakage of information,
namely \textit{explicit} and \textit{implicit flows}~\cite{sabelfeld2003language}.
For simplicity, we assume two security levels $L$ (low) and $H$ (high).
An explicit flow happens by assigning a high variable to a low one ($l
:=h$), where the variables $l$ and $h$ store secret and public values.
One way to deal with this is to let the level of $l$ be changed to
$H$ (assuming $l$ is not directly observable by attackers).
In an implicit flow,
there is an indirect flow of information due to control flow structures. 
For example, in the \kw{if} statement: $l := 0; \ \kw{if} \ h \ \kw{then} \ l := 1 \  \kw{fi}$, the guard  $h$ is high, and it affects the value of $l$ indirectly.
In order to avoid implicit flows, a program-counter label ($\pc$) is introduced,
which captures the program context security level~\cite{sabelfeld2003language}.
If
the guard is high, then $\pc$ becomes high.

Information-flow control approaches are divided into two categories, static and dynamic~\cite{russo2010dynamic}.
Static analysis  
is conservative:
in order to be sound, it over-approximates security levels
of variables
(for example, it over-approximates a  formal parameter 
 to high, 
while at run-time a corresponding actual parameter
 can be low). 
This causes unnecessary rejections of programs,
especially when the complete program is not statically known,
as is usually the case in distributed systems.
On the other hand, static analysis has 
less run-time overhead,
since security checks are performed before program execution~\cite{russo2010dynamic}.
\emph{Dynamic information-flow} techniques perform security checks at run-time, and this introduces an overhead.
But they are more permissive and precise
since they deal with the exact security levels of variables instead of an over-approximation~\cite{hedin2012perspective}.

In what follows, we briefly explain some of the terminologies of information-flow security that we 
use 
in this paper:
\newline
\textbf{Security levels}. Variables are tagged with security levels, organized
by a partial order 
$\sqsubseteq$ and a join $\sqcup$ operator,
such that $L\sqsubseteq H$ and $L\sqcup H=H$.
Given two security levels, the $\sqcup$ operator 
returns the
least upper bound 
of them.
The policy rules describe the flow of information 
regarding 
security levels.
Inside a class,  declarations of fields, class parameters, 
and formal parameters 
may have  statically declared initial security levels.
These levels may change with statements.
At run-time, 
objects  are assigned 
security levels as well.
\\
\textbf{Flow-sensitivity}. 
In a \textit{flow-sensitive analysis}, variables start with their declared security levels (the ones without levels are assumed as $L$),
but levels may change after each statement.
In an assignment, the level of a left-hand-side
becomes high if $\pc$ is high, or there is a high variable on the right-hand-side.
The level of a left-hand-side becomes low
if the assignment does not occur in a high context,
and there are no high variables on the right-hand-side~\cite{russo2010dynamic}.
Otherwise, the security level of a variable does not change.
For example, a flow-sensitive analysis accepts the program 
$ h := 0; \ \kw{if}  \ h \ \kw{then} \ l := 1  \ \kw{fi};  \ \kw{return}  \ l;$
since $h$ is updated to low after the first assignment, and there is no leakage.
In an assignment $x:=e$, the security level of $x$ is updated to the join
of the security level of $\pc$ and expression $e$, which is the join of the security
levels of variables in $e$~\cite{russo2010dynamic}.
By a flow-sensitive dynamic analysis, security levels propagate to other variables, and precise levels are evaluated during execution. 
In~\cite{russo2010dynamic}, it is shown that although this technique is permissive, it is unsound 
because of implicit flows that happen in conditional structures with high guards.
The authors propose a conservative dataflow rule to avoid the implicit flows.
When the guard is high, the security levels of variables appearing on the left-hand side of
assignments in the taken and untaken branches are raised to high.
For instance, considering an initial environment $\Gamma=$\{$h\mapsto H, l_1\mapsto L, l_2\mapsto L$\} and the program: 
$\kw{if} \ h \ \kw{then}  \ l_1:=1; \ \kw{else} \ l_2:=0;  \ \kw{fi}$
when the condition is true, $\Gamma$ changes to $\Gamma=$\{$h\mapsto H, l_1\mapsto\ H, l_2\mapsto H$\}
for a sound flow-sensitive analysis~\cite{russo2010dynamic}.

Active object languages
have gained a lot of attention in recent years~\cite{boer2017survey}.
They are based on a combination of the \emph{actor model}~\cite{agha1985actors} and object-oriented features.
Some well-known active object
languages are
Rebeca~\cite{sirjani2001compositional,sirjani2004modeling},
Scala/Akka~\cite{haller2009scala,wyatt2013akka}, Creol~\cite{johnsen2007asynchronous}, ABS~\cite{johnsen2011abs}, Encore~\cite{brandauer2015parallel}, and ASP/ProActive~\cite{caromel2005theory,caromel2006proactive}.
In the  call/return paradigm without futures,
a method call and the corresponding return value are sent by message passing between the
caller and callee objects.
An object has an external queue for receiving these messages.
In a naive model, a caller waits while the callee computes the
call, such a call is blocking, which is undesirable.

The notion of futures is a common mechanism 
in active object languages
for avoiding blocking calls~\cite{boer2017survey,KARAMI2019154}.
When 
a remote method call is made, a future object with a unique identity is created.
The caller may continue with other processes while the callee is
computing the return value.
The callee sends back the return value to the corresponding future object, and then
the future is called \emph{resolved}.
The future mechanism gives the flexibility that in a service-oriented system, a client program
waits only when it needs to get the value from a future object.
Futures can be explicit with a specific type and access operations like in ABS
or can be implicit with automatic creation and access~\cite{boer2017survey}.
An explicit future in ABS is created as $f := o!m(\many{e}); v:=f.\kw{get}$,
where $f$ is a future variable of type $\mathit{Fut}[T]$,
and $T$ is the type of the future value.
The symbol ''!''
indicates an asynchronous method call $m$ of object $o$,
and the future value is retrieved with a \kw{get} construct.
In contrast, a synchronous call is denoted by $o.m(\many{e})$, which blocks the caller until the return value is retrieved.

\textbf{Information-flow security with futures} \\
Static {analysis} 
would be
difficult for programs with futures, where
the result of a call is no longer syntactically connected to the call, compared to the call/return paradigm in
languages without futures~\cite{KARAMI2019154}.
For example, a future may be created in one module and received as a parameter in another.
A future
may not statically correspond to a unique call statement.
One can overestimate the set of possible
call statements that correspond to this given future parameter, but it needs 
access to the whole program, which is a problem with distributed systems.
Moreover, the return values of these
overestimated calls may have different security levels.
In this case, the highest security level for a future is overestimated.

In static analysis there is limited static knowledge about object identities.
Therefore a language  may compensate by including  a syntactic construct
for testing security levels, as in~\cite{oweramezani17dpm}.
By testing relevant security levels, one may achieve fine-grained security control
(at the cost of added branching  structure).
A static analysis that assumes future references as low allows passing of futures.
However, the exact security level of a future value is revealed when it becomes resolved.
A dynamic approach is required to control access to a future value at run-time when it is resolved, and if the value is high it needs protection.
The concept of futures makes static checking less precise, and the need for complementary run-time checking is greater, as provided in the present paper.
\subsection{A  core language syntax}\label{language-syntax}
In order to exemplify our security approach,
the security semantics (in Sec.~\ref{op}) is embedded in a core language,
using a  syntax similar to Creol/ABS. 
Our security approach can be applied to any active object language with futures.
In our core language, all remote object interactions are made by means of futures.
Therefore, the result of a remote call always is returned back to the
corresponding future.
Figure~\ref{fig:unifiedsyntax} provides 
the core language syntax.
The statement 
$o!m(\many{e})$ is an asynchronous call toward object $o$ without waiting for the result and associating a future for the call.
Here $o$ may be a list of objects, broadcasting to those objects, and
$\many{e}$ is a list of actual parameters. 
The statement $q!o.m(\many{e})$ is an asynchronous and remote method call, which
creates an explicit
future and assigns the identity to the {future variable} $q$. 
A return value can be accessed with a get statement 
$ q?(x)$,
which blocks while waiting for the corresponding future to become resolved
and then assigns the value to the variable $x$.
The statement $q?(x)$ is similar to $x=q.\kw{get}$ in ABS.
%

 \begin{figure}[t]
\centering
\begin{footnotesize}
	\begin{tabular}{ll}
		\emph{Basic constructs}\\ \hline 
		${x:=\kw{new}_{\mathit{lev}}\ c(\many{e})}$ &{object creation with the security level $\mathit{lev}$}\\
		${\kw{return}\ e}$ &{creating a method result/future value}\\
		${\kw{if}\ b\ \kw{th}\ s\ [\kw{el}\ s']\ \kw{fi}}$
		&{if statement ($b$ a Boolean condition)}
		\\
		${!o.m(\many{e})}$ &simple asynchronous remote call
		\\
		${q!o.m(\many{e})}$ &remote asynchronous call, future variable $q$\\
		$ q?(x) $  & {blocking access operation on future $q$}
	\end{tabular}
\end{footnotesize}
	\caption{Core Syntax.}
	\label{fig:unifiedsyntax}
\end{figure}
Figure~\ref{figabs} illustrates an example of a health care service in our core language, involving futures for communication and sharing of secure medical records. 
High variables are emphasized based on static analysis/user specifications,
in this case reflecting patients' medical test results.
The problem the system is trying to solve is to prevent personnel and patients
with lower-level access from accessing the medical records.
In this example,
the server, specified by class \textit{Service}, searches for the test result of a patient
and publishes it to
the patient and associated personnel through the  \textit{proxy} object.
The server uses futures to communicate test results to the \textit{proxy},
thus it does not wait for the results and is free to respond to any client request.
Instead, the \textit{proxy} waits for the results from the laboratory and publishes them. 
In line 10, a produce cycle is initiated between the \textit{server} and \textit{proxy}.
In line 13, the server searches for the test results of this patient by sending a remote asynchronous call
to the laboratory \textit{q!lab.detectResult}, 
where \textit{q} is the future variable and 
\textit{lab} is the callee.
In line 14, the future 
is passed to object \textit{proxy} for publishing, and
this object waits and assigns the result to the variable \textit{r} as in line 18.
Then object $\mathit{proxy}$ sends $r$ to the patient and health personnel.
\begin{figure}[t]
\begin{footnotesize}
\begin{abs}
data type Result = ...         // definition of medical data
interface ServiceI { Void produce() ... }	
interface ProxyI { Void publish(Fut[Result] q, PatientI a, List[PersonnelI] d) ... }	
interface LabI { $\textsf{Result}_{\Red{\textsf{H}}}$ detectResult(PatientI a) ... }	
interface PatientI { Void signal($\textsf{Result}_{\Red{\textsf{H}}}$ r) ... }	
interface PersonnelI { Void signal($\textsf{Result}_{\Red{\textsf{H}}}$ r) ... }	
interface DataBase { ... }		
class Service(LabI lab, DataBase db) implements ServiceI {
 ProxyI proxy = new Proxy(this); 
 !this.produce(); $ \ \ $ // initial action, starting a produce cycle		
 Void produce() { Fut[Result] q;$ \ $  PatientI a; $ \ $ List[PersonnelI] d = Nil;  
   ....  $ \ $ $ \ \ $ // finding a patient and the associated personnel in database
   q!lab.detectResult(a);
   !proxy.publish(q, a, d); } } $  \ $ // sending the future (q), no waiting 
		
class Proxy(ServiceI s) implements ProxyI{ $\textsf{Result}_{\Red{\textsf{H}}}$ r;
 Void publish(Fut[Result] q, PatientI a, List[PersonnelI] d) { 
 q?(r); $ \ $ // waiting for future and assigning the value to r 
 !a.signal(r); $ \ \ $     //     $\Red{ r\ is\ \mathit{now} \ H}$
 !d.signal(r); $ \  $  // multicasting, $\Red{ r\ is\ H}$
 !s.produce(); } }
\end{abs}
\end{footnotesize}
	\caption{
	Example of sharing \Red{\emph{high}} 
	patients' test results by means of futures
}
	\label{figabs}
\end{figure}
A static analysis
over-approximates the security levels of test results as high, which
leads to
rejections of information passing.
Note that the two signal calls in class $\mathit{Proxy}$ 
would not be allowed if we only use static checking
since we cannot tell which patients and personnel have a
high enough level.
A static analysis which considers references as low
allows passing a future $q$ to object \textit{proxy} (line 14),
but later when it is resolved, the future value can be high, and $\mathit{proxy}$ compromises security by sending this value to other objects.
\section{A 
framework for 
non-interference%
} \label{our language}
Like Creol/ABS, our core language is equipped with
behavioral interfaces,
which means that
created objects are typed by interfaces, not classes~\cite{johnsen2007asynchronous}.
As a result, remote
access to fields or methods that are not declared through an interface is impossible.
Therefore, observable behavior of an object is limited to its interactions through remote method calls.
Illegal object interactions are the ones leading to an information-flow from high to low level
information holders.
In this paper, we exploit the notion of wrappers
to perform dynamic checking for enforcing non-interference in object interactions. 
The operational semantics of wrappers is explained in Sec.~\ref{wrapper}.
An object can reveal confidential information through outgoing method calls
by sending actual parameters with high security levels to lower level objects.
In this case, a wrapper based on the security policies, blocks illegal communications.
A future is wrapped when there is a high return value, the
wrapper controls access to the value and blocks illegal ones.
Security policies of a wrapper are based on run-time security levels.
Inside an object, in order to compute the exact security levels of created messages or return values,
flow-sensitivity must be active,
using dynamic information-flow enforcement.
The operational semantics of our dynamic flow-sensitive enforcement is given in Sec.~\ref{op}. 

We can be conservative and wrap all objects and correspondingly activate 
flow-sensitivity,
but this will cost run-time overhead.
In order to be more efficient at run-time,
it is important to perform dynamic checking only for components where it is necessary.
We benefit from a class-wise static analysis to categorize a class definition as \textit{safe} or \textit{unsafe}.
\emph{A class is safe if it does not have any method calls with high actual parameters and return values.
A class is unsafe if it has a method call with at least one high actual parameter or a high return value.}
Objects created from unsafe classes are wrapped with active flow-sensitivity.
While objects from safe classes do not need a wrapper or active flow-sensitivity.
This will make the execution of objects of safe classes faster, as we avoid a potentially large number of run-time checks and wrappers.

By combining our dynamic approach with static analysis, we benefit
from the advantages of both approaches, shown in the table below,
where $+$ is better that $-$.
Namely, we do not do runtime checking of statically safe classes.
The gray cells demonstrate our approach's advantages.
{\small
\begin{center}
	\begin{tabular}{ c |c |c }
		 & \textbf{static analysis} & \textbf{dynamic analysis}  \\ \hline
		runtime efficiency & \cellcolor{shadecolor} +  & - \\  \hline 
		exact security and permissiveness
 & - & \cellcolor{shadecolor} +  
	\end{tabular}
\end{center}}
\subsection{Static analysis} \label{static}
Our security approach can be combined with a sound static over-approximation for detecting security errors and safe classes, for instance, the one proposed in~\cite{oweramezani17dpm},
which is more permissive (to classify a class as safe) than the static analysis indicated here, in that  
high communication is considered secure
as long as the declared levels of parameters are respected.
Class fields and methods' formal parameters and return values are declared with maximum security levels (maximum of possible level that can be assigned at run-time).
Local variables that do not have a declared security level, they start with security level $L$ (as default)
but
may change after each statement due to flow-sensitivity.
Since an object encapsulates local data and fields, there is no need
to control the flow of information inside an object.
However, since all object interactions are done by method calls,
typing rules are defined to control security levels of return values and actual parameters~\cite{oweramezani17dpm}.
A class is defined as safe if the confidentiality of each method is satisfied.
The confidentiality of a method is satisfied if
the typing rules for its return value and actual parameters are satisfied. 
The typing rules check that
each occurrence of an actual parameter and a return value are not high; then, the class is safe; otherwise, it is unsafe and needs dynamic checking..

We categorize safe and unsafe classes for the example in Fig.~\ref{figabs}.
The interface laboratory \textit{LabI} has a method with a high return value (\textit{detectResult}).
Thus the object \textit{lab} is unsafe and flow-sensitivity is active to
compute the security level of the return value at run-time.
The class \textit{Proxy} is unsafe since it has at least one method call with a
high actual parameter,
thus object \textit{proxy} is active flow-sensitive and wrapped.
Based on the security levels of a patient and personnel,
the wrapper of object \textit{Proxy}
decides whether to send out the \textit{signal} messages or block them.
Precise level information for objects is available at run-time.
\subsection{
Security semantics} \label{op}
We here discuss the operational semantics of our core language with the embedded
notions of flow-sensitivity and security wrappers.
The small-step operational semantics is defined by a set of rewrite
logic (RL) rules in the Maude format.
Maude 
uses sorts to define data types and data structures
together with operator functions to manipulate values and
define the value set of a sort.
Operator functions are defined by equations.
Rules and equations can be conditional
as $lhs \longrightarrow rhs \  \mathit{ if} \mathit{condition}$ for rules
and $\mathit{ rhs} = \mathit{ lhs} \ \mathit{ if \ condition}$ for equations.
A systems is modeled as  multisets
consisting of 
the relevant components and messages, and the outcome can be different based on non-deterministic applications of rules.
Maude has a built-in understanding of
associativity, commutativity, and
identity, which are used
to define multisets.
A system state is modeled as a \textit{configuration}, which is
a multiset
of objects (with or without  active flow-sensitivity), queues,
messages, futures, and wrappers.

\begin{footnotesize}
\begin{maude}
sorts Class Object Queue Msg Future Wrapper Configuration .
subsorts Class Object Queue Msg Future Wrapper < Configuration .			
op none : $ \longrightarrow $ Configuration .
op _ _ : Configuration Configuration $\longrightarrow$ Configuration [assoc comm id: none] .     
\end{maude}
\end{footnotesize}

The \textit{Configuration} sort is a super sort of the components' sorts. (Classes are included in a configuration to provide static information
about fields and methods.)
The first operator 
(which does not take any arguments) defines $none$ as a constant of sort \textit{Configuration}, representing an empty state. 
The binary multiset operator 
defines the \textit{Configuration} sort as a multiset.
The underscores (\_) determine where the arguments should be placed.
In Maude, an object is represented by a term
$<o:c \ | \  \mathit{Att}_{1} : v_{1}, ..., \mathit{Att}_{n} : v_{n}>$, where $o$ is the object name, $c$ is its class name,
the $\mathit{Att}_{i}$'s are the object attributes, and the $v_{i}$'s are the corresponding meta variables, representing instances~\cite{olveczky2018designing}.

Figure~\ref{fig:object-model} represents the components of a configuration.
An \textbf{object} component 
is represented as:
$ <o:c\  |  \ \Att:a, \Prr:(l,\mathit{sl}), \Cnt:n, \Lvl: lev>$,
where $o$ is the object identity, $c$ is the class name that the object is created from, $a$ is
the object fields' state (a mapping from instance variables to values), 
the pair $(l,\mathit{sl})$ represents the current active process,
where $l$ is the
state of local variables (a mapping from local variables to values),
and $\mathit{sl}$ is a list of statements in the active process.
The attribute $n$ is a natural number
to assign unique identities to each method invocation created by the object.
Variable $\mathit{lev}$ is the security level of the object ($\lev \in \{L,\ H\}$),
which is assigned by a programmer at the time of creating an object, using the syntax $x := \kw{new}_{\lev} \ c'(..)$.
A \textbf{flowsen-obj} component shows an object with active flow-sensitivity, with extra fields
$\on$,
representing active flow-sensitivity,
and $\pcs$ denotes a stack of context security levels inside an object.
An empty stack ($\pcs=\mathit{emp}$) corresponds to $\pc=L$, representing a low context security level.
A non-empty stack ($\pcs\neq \mathit{emp}$) corresponds to $\pc=H$, representing a high context security level.
A \textbf{class} is represented as:
$ <\Cl:c \ | \ \Att:a', \Mtds:\mm, \Cnt:f>$,
where $c$ is the class name,
$a'$ 
is the class fields state (attributes),
$\mm$ is a multiset of method declarations (with local variables and code), 
and $f$ is a natural number for generating unique object identities.
The \textbf{unsafe-class} component has an additional field $\on$,
denoting that all objects created from this class will be wrapped with active flow-sensitivity.
{\small \begin{figure}[t]
		\centering
		$$\begin{array}{lll}
		\OPRULE{object} & <o:c \ | \ \Att:a, \mathit{Pr}:(l,\sll), \Cnt:n, \Lvl: \lev> 	\\
		\OPRULE{flowsen-obj} & <o:c \ | \ \Att:a, \mathit{Pr}:(l,\sll),  \Cnt:n, \Lvl: \lev, \on, \PC: \pcs  >	
		\\
		\OPRULE{class} & <\Cl:c \ | \ \Att:a', \Mtds:\mm, \Cnt:f>\\[0.3em]
		\OPRULE{unsafe-class} & <\Cl:c \ | \ \Att:a', \Mtds:\mm, \Cnt:f, \on >\\
		\OPRULE{queue} & <\Qu:o \ | \ \Ev : \mathit{msg} > \\
		\OPRULE{invoc-msg}  & \invocc(\fId,m,\dl) \  o \ \mathit{to} \ o' \\
		\OPRULE{comp-msg}  & \comp(d) \ o \ \mathit{to} \ \fId \\
		\OPRULE{future} & \fut(\fId, d) \\
		\OPRULE{wrapper}  & \{\Wrr : \wId , \Lvl :\lev \ | \ \textit{config}\}
		\end{array}$$
		\caption{The components of a configuration.}
		\label{fig:object-model}
\end{figure} }

The \textbf{queue} component represents the external queue of an object for storing method invocations toward the object.
The queue is associated with 
object $o$, 
and $\mathit{msg}$ is a multiset
of stored messages. 
The \textbf{invoc-msg} component represents an invocation message,
where $\fId$ is the future identity, $m$ is the called method name,
and $\dl$ is a list of actual parameters.
The \textbf{comp-msg} component represents a completion message,
where $d$ is the return value from object $o$ to the future $\fId$.
The 
\textbf{future} component shows a resolved future 
with 
identity $\fId$ and value $d$, and $\fut(\fId, \und)$ indicates
an unresolved 
future with no value.
A \textbf{security wrapper} is represented as:
$\{ \Wrr:\wId , \Lvl :\lev  | \textit{config}\}$,
where $\wId$ is the wrapper's identity, $\lev$ the level,
and 
the meta variable $\mathit{config}$ denotes the configuration inside the wrapper.
A wrapper wraps a future or an object and correspondingly,
gets the identity and level of that component.
A wrapper acquires and records the security levels of
destination objects before sending messages, which we ignore here.

Operational semantics is defined by a number of rewrite rules,
which are applied non-deterministically. 
If the left-hand-side of a rule matches a subset of a configuration,
this rule can be applied, and the
left-hand-side pattern is replaced by the corresponding
instance of the right-hand-side.
Since configurations can be nested inside wrappers,
the rules can also be applied to inner configurations. 
Non-overlapping rules may be applied in parallel
indicating non-determinism and concurrency.
Each rule involves one object, implying that objects are executing independent. 
For simplicity, in the right-hand-side of the rules, fields that are as in
the left-hand-side are ignored and indicated by~$...$,
but 
changed fields are shown.

Figure~\ref{operational-semantics-flowsen} represents 
the 
flow-sensitivity semantics of objects.
The \textbf{new} rule shows the command $x:= \kw{new}_{\lev} c'(\many{e})$ in the active process of object $o$, creating a new object from an unsafe class $c'$.
It creates an active flow-sensitive object inside a wrapper.
The wrapper gets the same identity and level as the new object, $\Wrr: \newId, \Lvl: \lev$,
and it wraps the queue as well.
$\newId$ is an abbreviation for a function $\new(c',f)$,
which creates a unique object identity from
the class name $c'$ and object counter $f$.
This rule assigns $\newId$ to the variable $x$
and increases the object counter by one in the class $c'$.
The active process of the new object is initialized with ($\mathit{empty,idle}$),
denoting an empty state of local variables
and an empty statement list (no active process), respectively.
The new object level is $\lev$ as it is specified in the command $ \kw{new}_{\lev} c'(\many{e})$, if not, 
the level is assumed low.
The other fields are initialized as $\on$ and $\mathit{Pc:emp}$, where the latter represents an empty stack of context security level.
The queue is associated to the new object's identity
with no stored messages ($\mathit{noMsg}$).
The semantics of the actual class parameters
is treated like parameters of an asynchronous call
$!x.\mathit{init}(\many{e})$,
where \emph{init} is the name of the initialization method
of a class.
{\small  \begin{figure}[!th]
		$$\begin{array}{lll}
%
		\OPRULE{new} 
		&& < o : c \ | \Att: a,  \Prr: (l,  x := \kw{new}_{\lev} \ c'(\many{e});  \sll),  \Cnt: n, \Lvl: \mathit{lev'} > \\
		&&< \Cl:c'  | \Att: a' , \Mtds: \mm , \Cnt:f , \on>
		\\ &\transition &
		< o : c  |..., \Prr: (l, x := \newId; !x.\mathit{init}(\many{e}); \sll), ...\! \!>\  < Cl:c'  | ... , \Cnt: f+1, ...>\\
		&&\Blue \{  \Wrr : \newId,  \Lvl : \lev  | < \newId : c'  | \Att: a', \mathit{this} \mapsto \newId ,\Prr:(\mathit{empty},\mathit{idle}), \\
		&& \Cnt:1, \Lvl: \lev, \on, \PC:\mathit{emp} > \
		< \Qu:\newId \ | \ \Ev: \mathit{noMsg} >  \Blue \}
		\\ [0.3em]
		
%
		\OPRULE{assign} 	
		&&< o : c \ | \ \Att: a, \Prr:(l, x:= e; \sll), \Cnt:n,  \Lvl:\lev,  \on,  \PC:\pcs >
		\\ &\transition &
		\ifs (\mathit{dom(x,l)}) \ 
			< o : c \ | \ ..., \Prr: (\mathit{insert}(x, [e]_{\mathit{level}([e])\sqcup \pc}, l), \sll),  ... > \\
		&&\els < o : c \ | \ \Att: \mathit{insert}(x, [e]_{\mathit{level}([e])\sqcup \pc}, a), \Prr: (l, \sll), ...> 
		\\ [0.3em]
		
		\OPRULE{if-low}	
		&&< o : c \ | \ \Att: a, \Prr: (l,\kw{if} \ e \ \kw{th}\ \sll' \ \kw{el}\ \sll'' \ \kw{fi} ;\sll), ..., \on,\PC: \pcs> 
		\\& \transition&
		\mathit{if} ([e] = T) 
		 < o : c  \ |  ...,\Prr: (l, \sll' ;\sll), ...\! \! >  \mathit{else} < o : c   |  ..., \Prr: (l, \sll'' ; \sll),...\!> \\
		&\mathit{if}& \level ([e]) = L \ \wedge \ \pcs= \mathit{emp}
		\\ [0.3em]
		
		\OPRULE{if-high}	
		&&< o : c \ | \ \Att: a, \Prr: (l, \kw{if} \ e \ \kw{th} \ \sll' \ \kw{el} \ \sll'' \ \kw{fi} ;\sll), ...,\on,\PC: \pcs > 
		\\& \transition&
		\mathit{if}\ ([e] = T) \
		 < o : c | ..., \Prr: (l, \sll' ; \mathit{endif}(\sll'') ;\sll),... ,\PC: \pcs.\mathit{push(H) }> \\
		&& \els < o : c \ | \ ...,\Prr: (l, \sll'' ; \mathit{endif}(\sll') ; \sll), ..., \PC: \pcs.\mathit{push(H)} > \\
		&\ifs  &\level[e] =H \ \vee \ \pcs\neq \mathit{emp}
		\\ [0.3em]
		
		\OPRULE{endif} 
		&&< o : c |  \Att: a, \Prr: (l, \mathit{endif}(\sll') ;\sll), \Cnt: n,\Lvl: \lev,\on, \PC: \pcs\ > 
		\\& \transition&
		< o : c  | ...,\Prr: (l,\mathit{\mathit{update}_{H}}(\sll');\sll),..., \PC: \pcs.\mathit{pop()} >
		\\ [0.3em]
		
		\OPRULE{start}	
		&& < o : c \ | \ \Att:  a,  \Prr:  (\mathit{empty},\mathit{idle}), \Cnt: n, \Lvl: \lev, ... > \\
		&& < \Qu:o \ | \ \Ev: \mathit{msg} \ \invocc( \mathit{Id}, m ,  \dl)>  
		\\& \transition&
		< o : c  |  \Att:  a, \Prr: ([ \mathit{label}\mapsto \mathit{Id}, \many{x} \mapsto \dl, \many{l}\mapsto \many{l_{0}}],  \sll), ...\! \!>  < \Qu:o  |  \Ev: \mathit{msg} >\\
		&& \kw{where} \ \text{method} \ m \ \text{binds} \ \text{to} \ m(\many{x})\{ \many{l_0}; \ \sll;\}	\ \text{with} \ \text{initial} \ \text{local state
		} \ \many{l_{0}}
		\\ [0.3em]
		
		\OPRULE{call}	
		&& < o : c \ | \ \Att:  a,  \Prr:  (l, q ! e. m(\many{e}); \sll),  \Cnt: n, \Lvl: \lev,  ...> 
		\\& \transition&
		< o : c  \ | \ \Att:  a, \Prr: (l,  q := \newfId;  ! e . m(\many{e});   \sll), ...> \
		\fut(\newfId , \und )
		\\ [0.3em]
		
		\OPRULE{call'} 	
		&& < o : c \ | \ \Att: a,  \Prr: (l,  ! e . m(\many{e});  \sll),   \Cnt: n, \Lvl: \lev,  \on,\PC: \pcs >
		\\& \transition&
		< o  : c  | ... , \Prr: (l,\sll),  \Cnt: n + 1 , ...>  (\invocc( \newfId, m ,  [\many{e}])  \ o \ \mathit{to} \ [e])_{ \level([\many{e}]) \sqcup \pc} 
		\\ [0.3em]
		
		\OPRULE{return} 	
		&&< o : c |  \Att: a, \Prr: (l, \kw{return}(d); ), \Cnt: n, \Lvl: \lev,  \on, \PC: \pcs > 
		\\& \transition&
		< o : c  |  ...,\Prr:(\mathit{empty}, \mathit{idle}), ... > (\comp(d) \ o \ \mathit{to} \ \eval(\mathit{label}, l))_{ \level(d) \sqcup \pc} 
	\end{array}$$
	\caption{Flow-sensitivity operational semantics, where $\newId = \mathit{new}(c',f)$, $[e]= \mathit{eval}(e,(a \# l))$, $T=\mathit{true}$, and $\mathit{nfId} = \mathit{new}(o,n)$.}
	\label{operational-semantics-flowsen}
	\end{figure}}

In an active flow-sensitive object,
the evaluation result of an expression $e$ is denoted by
$[e]$, which
returns back 
both the value ($d$) and security level ($\mathit{lev}$) as $d_{\lev}$.
If this value is assigned to a program variable $x$,
the binding $x\mapsto d_{\lev}$ is added to
the corresponding state in the object.
At the time of analysis, a variable without a security level is assumed low (which might change afterward).
The function $\mathit{level}([e])$ returns back the security level of $[e]$ ($\lev$). If $\many{e}$ is a list of expressions, then $\mathit{level}([\many{e}])$ returns the join of security levels of expressions in $\many{e}$.
In addition, $[e]$ is an abbreviation for 
$\mathit{eval}(e,a\#l)$.
The $\mathit{eval}$ function evaluates an expression $e$ by considering 
the map composition of $a$ and $l$ (i.e., $a\#l$),
denoting that the binding of a variable name in the inner scope $l$ shadows
any binding of that name in the outer scope $a$~\cite{KARAMI2019154}.
Considering a variable $x$, the function $\mathit{eval}(x,a\#l)$ returns $\mathit{eval}(x,l)$ if $l$ has a binding for $x$; otherwise, $\mathit{eval}(x,a)$.

The Rule \textbf{assign} shows an assignment statement of $(x := e)$
in an active flow-sensitive object.
The function $\mathit{dom}(x,l)$ is true if variable $x$ is in the local state $l$,
then $x$ with its new value and level is updated in $l$
by a function $\mathit{insert}$;
otherwise, it is updated in the attribute state $a$.
Variable $x$ is updated to $x\mapsto [e]_{\mathit{level}([e])\sqcup pc}$ in the corresponding domain ($\pc=H$ if $\pcs\neq \mathit{emp}$, and $\pc=L$ if $\pcs= \mathit{emp}$).
A variable is added to the domain with at least the $\pc$ level.
This rule inserts variables containing the identity of a new object or future with the $\pc$ level,
although they are not confidential.
The \textbf{if-low} is applied when the guard's security level is low and $\pcs=\mathit{emp}$, here $T$ denotes $\mathit{true}$.
The guard is evaluated and the corresponding branch is taken.
When the guard's security level is high or the $\pcs$ stack is not empty, similar to the approach in ~\cite{russo2010dynamic},
the security levels of variables appearing in both branches are raised to high to avoid implicit flows.
This is captured in the \textbf{if-high} rule, where
the guard's security level ($\mathit{level}([e])=H$) is pushed to the $\pcs$ stack, resulting in $\pc=H$.
In addition, an auxiliary function $\mathit{endif(sl')}$ is defined to
mark the join point of the \kw{if} structure, where $\sll'$ is the untaken branch.
In the \textbf{endif} rule, the function $\mathit{update}_{H}(\sll')$ raises the security levels of variables appearing in the left-hand-side of assignments in $\sll'$
to high.
Moreover,
the last element of $\pcs$ is removed ($\pcs.\mathit{pop()}$), reflecting the previous context level.

The \textbf{start} rule is applied when there is no active process inside an object
(the object is \textit{idle}), and there is an invocation message in the queue.
The object starts executing the corresponding method.
The rule captures the method's body $\sll$ as the active process, and the object's local state is updated, binding
formal parameters to the actual ones and storing the future identity ($\mathit{Id}$) as $\mathit{label}$.
The $\mathit{label}$ variable is used later
to return back the result of a method to the corresponding future.
In the rules, we do not cover local calls, which do not involve
object interactions (therefore, less interesting here).
The \textbf{call} rule deals with 
an asynchronous and remote method call $q ! e . m(\many{e})$,
where $q$ is a 
future variable, and $e$ is the callee.
The call is reduced to $! e . m(\many{e})$, and a (not resolved) future with identity
$\newfId$ is created.
The ${\newfId}$ is an abbreviation for function $\mathit{new}(o,n)$, which creates a unique identity from
the object name $o$ and $n$. 
The new identity ${\newfId}$ is assigned to the future variable $q$.
The rule \textbf{call'} shows an asynchronous call $! e.m(\many{e})$ in an active flow-sensitive object, which creates
an invocation message toward the callee.
The invocation message contains $\newfId$, method name $m$, and actual parameters $\many{e}$.
The security level of the invocation message is $ \mathit{level}([\many{e}]) \sqcup pc$, which
is the join of security levels of the actual parameters in $\many{e}$ and $\pc$.
The level $ \mathit{level}([\many{e}]) \sqcup pc=L$ if all the actual parameters in $\many{e}$ are low and $\pc=L$.
The \textbf{return} rule interprets
a \kw{return} statement in an active flow-sensitive object,
which creates a completion message toward the corresponding future ($\mathit{eval(label}, l)$),
and the object becomes idle.
The security level of the completion message is $\mathit{level}(d) \sqcup pc$, and $d$ is the return value.
We assume that each method body ends with a \kw{return} statement.
\subsection{Operational semantics of security wrappers}\label{wrapper}
A wrapper 
is assigned to an object or a future by the run-time system.
All invocation messages generated from a wrapped object first meet the wrapper for security checking.
The \textbf{w-invc} rule in Fig.~\ref{operational-semantics-wrapper}, represents a wrapper with an invocation message inside, and
if the security level of the message
is less than or equal to the destination object level 
($\mathit{level}(o')$),
then the wrapper allows the message to go out.
Otherwise,
the wrapper eats the message, and it disappears from inside.
In this case, an object performing a get will deadlock because the invocation message was deleted.
We have extended the approach with a notion of errors,
so that the deletion of an invocation message results in an error value in the
corresponding future component.
This can be combined with 
an exception handling mechanism 
such that an exception is raised when a get operation tries to  access 
an error value.
However, as this is 
beyond the scope of this paper,
we ignore the exception handling part.
We simply indicate exceptions by assignments with \kw{error} in the right-hand-side.
The \textbf{fut-get'} rule represents the case when a future value is \kw{error},
and an object performs a get command $\fId ?(x)$ raises an exception instead of being blocked.
{\small \begin{figure}[t]
		$$\begin{array}{lll}
%
		\OPRULE{w-invc} 
		&& \Blue \{ \ \Wrr  :  o,  \Lvl : \lev \ | \ 
		(\invocc( \fId,  m ,  \dl 
		) \ o \ \mathit{to} \ o' )_{\mathit{lev'}}  \mathit{config} \ \Blue \} \ \fut(\fId , \und )
		\\& \transition&
		\mathit{if} (\lev'  \sqsubseteq  \level(o')) \
		 \Blue \{  \Wrr  :  o,  \Lvl : \lev  |  \mathit{config}  \Blue \} (\invocc(\mathit{fId}, m ,  \dl ) \ o \ \mathit{to} \ o')_{\lev'} \\ 
		 && \hspace{3mm}\fut(\mathit{fId} , \und ) \\
		&& \els \Blue \{ \ \Wrr  :  o, \ \Lvl : \lev \ |  \   \mathit{config} \ \Blue \} \ \	\fut(\mathit{fId} , \kw{error} )
		\\ [0.3em]
		
		\OPRULE{fut-get'} 
		&&\fut(\fId, \kw{error})  \ < o : c \ |  \ \Att: a, \Prr: (l,  \fId ?(x);  \sll),  \Cnt: f,  \Lvl: \lev,...> 
		\\ &\transition&
		\fut(\fId, \kw{error}) \ < o : c \ | \  ..., \Prr: (l,  x := \kw{error};  \sll), ... > 
		\\ [0.3em]
		
		\OPRULE{invc-w'} 
		&&\Blue \{ \ \Wrr  :  o, \ \Lvl : \lev \ | \ \mathit{config} \ \Blue \} \ \ \invocc( \mathit{Id}, \ m , \ \dl) \ o' \ \mathit{to} \ o
		\\ &\transition&
		\ifs  (\forall i : \level(\dl_{i})  \sqsubseteq  \varLambda [m,i] ) \
		 \Blue \{  \Wrr  :  o, \Lvl : \lev  |  (\invocc( \mathit{Id},  m ,  \dl )  \ o' \ \mathit{to} \ o)  \mathit{config}    \Blue \}  \\
		&&\els \Blue \{ \ \Wrr  :  o, \ \Lvl : \lev \ |  \ \mathit{config} \ \Blue \} 
		\\ [0.3em]
		
		\OPRULE{invc-q} 
		&&< \Qu:o | \Ev: \mathit{msg} > \invocc( \mathit{Id}, m , \dl )  \ o' \ \mathit{to} \ o = 
		< ... | \Ev: \mathit{msg} \ \invocc( \mathit{Id}, m , \dl)> 
		\\ [0.3em]
		
		\OPRULE{w-fut}
		&&\fut(\fId, \und) \ \	(\comp(d) \  o' \ \mathit{to} \ \fId)_{\lev} 
		\\& \transition&
		\ifs (\lev  = H) \ 
		 \Blue \{ \ \Wrr  :  \fId,  \Lvl : \lev \ |  \ \fut(\fId, d)\ \Blue \} \  \els \fut(\fId, d)
		\\ [0.3em]
		
		\OPRULE{w-get} 
		&&\Blue \{ \ \Wrr  :  \fId, \ \Lvl : \lev' \ | \ \fut(\fId, d) \ \Blue \} \\ 
		&&< o : c \ | \  \Att: a, \ \Prr: (l, \ \fId ?(x); \ \sll),  \ \Cnt: f, \ \Lvl: \lev,\ ...>
		\\& \transition&
		\mathit{if}  (\lev \sqsubseteq \lev')\   \Blue \{  \Wrr  :  \fId, \Lvl : \lev'  | ...  \Blue \} < o : c \ | \  ...,  \Prr: (l,  x := \kw{error};  \sll),  ... > \\
		&&\els \Blue \{ \ \Wrr  :  \fId, \Lvl : \lev'  | \ ... \ \Blue \} \;  < o : c \ | \  ..., \Prr: (l,  x:= d;  \sll),  ... >
		\\ [0.3em]
		
		\OPRULE{fut-get} 
		&&\fut(\fId, d) \  < o : c \ | \  \Att: a, \Prr: (l,  \fId ?(x);  \sll),  \Cnt: f, \Lvl: \lev, ...>
		\\& \transition&
		\fut(\fId, d) \  < o : c \ | \  ..., \ \Prr: (l, x := d;  \sll),  ... > 
		\end{array}$$
		\caption{Operational semantics involving wrappers, where $[E]= \mathit{eval}(e,(a \# l))$ and $\mathit{nfId} = \mathit{new}(o,n)$.}
		\label{operational-semantics-wrapper}
\end{figure}}

The \textbf{invc-w'} rule represents a wrapper
and an incoming invocation message toward the object ($o$).
The notation $\varLambda [m,i] $ indicates the level of the
$i$th formal parameter of the method $m$ as declared in the class.
If the security level of each actual parameter ($\dl_{i}$)
is less than or equal to the security level of the corresponding formal parameter,
then the wrapper allows the message to go through 
and adds it to its configuration inside.
The \textbf{invc-q} equation stores an invocation message toward an object in the corresponding queue for later processing.
This equation
has priority over the rule of sending out an invocation message from a wrapper (if both rules apply),
since in Maude, an equation has priority over a rule.
The \textbf{w-fut} rule represents an unresolved future
and a corresponding completion message, which makes the future resolved.
On the right-hand-side, if the security level of the completion message ($\lev$) is high,
the future becomes
wrapped.
If the future value is low, there is no need to protect it
by a wrapper.
The wrapper has the same identity and level as the future.
The \textbf{w-get} rule represents a wrapped future and
an object which wants to get the future value. 
If the security level of the object ($\lev$) asking for the value is smaller
than the wrapper ($\lev'$),
then the wrapper sends an error value; otherwise, the object gets the future value ($d$).
The \textbf{fut-get} rule shows that an object gets the value from an unwrapped future without security checking.
\subsection{Non-interference}\label{proof}
We now sketch a proof 
that our security framework satisfies
noninterference.
 We say that two states of an 
object  are \emph{low equivalent} if
they agree on the low values of the variables (in $a\#l$),
while values of high variables may differ.
\begin{definition}
	\underline{Global non-interference} 
	means that 	
		for any two executions 
	with corresponding objects and futures, 
    and for  any corresponding objects
  with the same non-idle remaining statement lists ($sl$) and $\pcs$ stacks,
we have that the outputs of the two objects are low equivalent
(modulo the differences in object and future identities)
as long as they  consume 
the same low equivalent inputs. %
		\label{glob-def}
      \end{definition}
            \begin{definition}\underline{Local non-interference}
means that in two executions of an object 
with 
consumption of 
low equivalent inputs,
the states are low equivalent.
\label{L-Non-in}
\end{definition}
We first prove that each object is \textit{locally
  deterministic}, in the sense that
the next state of a statement (other than idle) is deterministic, i.e.,  depending only on
the prestate and any input.
     The only source of non-determinism is
      the independent speed of the objects, which means that 
      the ordering in the messages queues is in general non-deterministic.
	Only 
idle states may give rise to non-determinism. 

\begin{lemma}
	In our security model, each object is locally deterministic. 
	\label{local-deter}
\end{lemma}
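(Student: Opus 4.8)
The plan is to argue by a case analysis on the form of the head statement of the object's active process, driven by the rewrite rules of Figures~\ref{operational-semantics-flowsen} and~\ref{operational-semantics-wrapper} (together with the analogous, simpler rules for objects of safe classes). The first step is a preliminary reduction: every auxiliary operation occurring in these rules — $\mathit{eval}$, $\level$, $\mathit{insert}$, $\mathit{dom}$, $\mathit{update}_H$, $\mathit{endif}$, the identity generators $\mathit{new}(\cdot,\cdot)$, and the join $\sqcup$ — is a total function defined equationally, hence deterministic, so the right-hand side of any single rule \emph{instance} is uniquely determined by the matched left-hand side. It therefore suffices to show that for every non-idle object state there is at most one applicable rule instance, and that the instance, hence the object's successor state, is fixed by the object's local state together with whatever single environment component the rule reads (a completion message it consumes, or the future it is waiting on); these are exactly the ``inputs'' of the statement.

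Next I would run the sweep over the possible shapes of the head statement $\sll$. For $x := \kw{new}_{\lev}\,c'(\many{e})$ only the \textbf{new} rule applies, the dual rule for safe classes being mutually exclusive because it matches an ordinary \textbf{class} rather than an \textbf{unsafe-class}, and the generated object, queue and wrapper are all fixed by $\mathit{new}(c',f)$. For $x := e$ only \textbf{assign} applies, with the two branches separated by the decidable test $\mathit{dom}(x,l)$. For a conditional, \textbf{if-low} and \textbf{if-high} share the left-hand pattern but carry complementary side conditions, $\level([e])=L \wedge \pcs=\mathit{emp}$ versus its negation, and inside each the branch taken is fixed by the value $[e]$; for a marker $\mathit{endif}(\sll')$ only \textbf{endif} applies. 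For $q!e.m(\many{e})$ only \textbf{call}, for $!e.m(\many{e})$ only \textbf{call'}, and for $\kw{return}(d)$ only \textbf{return}; in each case the emitted message, the updated counter and the level annotation are functions of $o$, $n$, $\pcs$ and the evaluated arguments. This handles all the ``internal'' statements.

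The remaining non-idle statement is the blocking access $\fId?(x)$: the future component with identity $\fId$ is unique in the configuration, so the object has no choice about which future it reads, and the state of that component — resolved with a value, resolved with $\kw{error}$, or wrapped — selects exactly one of \textbf{fut-get}, \textbf{fut-get'}, \textbf{w-get}, each of which assigns $x$ a value that is a deterministic function of the future component and the object's level; if the future is unresolved no rule fires and the object simply blocks. Hence no non-idle statement admits a genuine choice. The sole remaining case is the idle object, where \textbf{start} can fire on any invocation message in the queue, so distinct queued messages give distinct successors — and this is precisely the ``independent speed of objects'' non-determinism, confined to idle states. Assembling the cases gives that every non-idle object transition is a partial function of the prestate and the input, i.e. the object is locally deterministic.

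The step I expect to be the main obstacle is making the object/environment boundary precise enough that the statement ``depends only on the prestate and any input'' is simultaneously true and non-vacuous: one must verify that an executing (non-idle) object reads from the surrounding configuration nothing beyond the unique future it is waiting on, that every other component a rule touches — wrappers, counters, freshly created objects and queues — is functionally determined rather than chosen, and that no rule outside the list above has an object-side left-hand side coinciding with one of them but an incompatible right-hand side. One should also check that the priority of the equation \textbf{invc-q} over the wrapper rule \textbf{w-invc} cannot reintroduce a choice that affects an object's own state. These confluence-of-side-conditions checks are routine but numerous, and getting the ``input'' notion exactly right is where the care is needed.
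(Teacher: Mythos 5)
Your proposal is correct and follows essentially the same route as the paper's own argument: a case analysis showing that for every non-idle statement exactly one rule (instance) applies, with the \kw{if} case resolved deterministically by the guard's level and the stack test, and with the only non-determinism confined to idle objects picking messages from the queue. Your treatment is more detailed than the paper's (explicitly handling the get/future cases, the determinism of the auxiliary functions, and the precise notion of ``input''), but it is the same proof in substance.
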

\begin{proof}
	According to our operational semantics, for each statement 
	(other than idle) there is only one rule to apply, and
	for an \kw{if} statement there are two rules, but the choice is given by testing the security level of the guard.
	There is no interleaving of processes inside an object as well.
	\qed
\end{proof}
\begin{definition}\underline{Low-to-low determinism} means that
any low part of a state or output resulting from a statement is
determined by the low part of the prestate.
\end{definition}
\begin{lemma}
	In our security model, each object is low-to-low deterministic.
	\label{low-to-low}
      \end{lemma}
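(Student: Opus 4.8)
The plan is to prove the lemma by a case analysis over the single-object reduction rules of Figures~\ref{operational-semantics-flowsen} and~\ref{operational-semantics-wrapper}, refining the determinism already established in Lemma~\ref{local-deter}. By that lemma the post-state and any emitted messages are a function of the pre-state together with the message or future actually consumed; what remains is to show that the \emph{low projection} of the outcome depends only on the \emph{low projection} of the pre-state and of the consumed input, up to a bijective renaming of the freshly allocated object and future identities (matching the ``modulo differences in object and future identities'' clause of Definition~\ref{glob-def}). I would split the rules according to whether they fire with $\pc=L$ (equivalently $\pcs=\mathit{emp}$) or with $\pc=H$ ($\pcs\neq\mathit{emp}$).

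For the $\pc=L$ cases the argument is local and uniform. In \textsc{assign}, the new level of the assigned variable is $\mathit{level}([e])\sqcup\pc$; when this equals $L$ we have $\pc=L$ and every variable occurring in $e$ is low, so $[e]$ — and hence the updated low value — is a function of the low pre-state. In \textsc{if-low} the guard is low, so the branch taken is determined by the low pre-state, and one recurses on the chosen continuation. In \textsc{call}, \textsc{call'}, \textsc{return} and \textsc{new} the level attached to the emitted invocation/completion message is $\mathit{level}([\many{e}])\sqcup\pc$ (resp.\ $\mathit{level}(d)\sqcup\pc$); it is $L$ exactly when the transmitted data is low, so a low output is low-determined, while the fresh future identity $\mathit{new}(o,n)$ resp.\ object identity $\mathit{new}(c',f)$ is accounted for by the renaming together with the fixed, class-determined initial configuration of a new object. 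In \textsc{start} the formals are bound from the consumed invocation message, so the low part of the new local state is the low part of that message together with the fixed initial local state $\many{l_0}$; in \textsc{fut-get} the variable receives the (low) value stored in an unwrapped future, i.e.\ a low input.

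The $\pc=H$ cases all follow from one sub-lemma: \emph{while $\pcs\neq\mathit{emp}$, executing any statement leaves the low part of the object state unchanged and emits no low message}. I would prove this by induction on the executed statement, using that (i) \textsc{if-low} cannot fire in a high context — its side condition requires $\pcs=\mathit{emp}$ — so a high region is syntactically self-contained and $\pcs$ is only pushed and popped by \textsc{if-high}/\textsc{endif}; (ii) every \textsc{assign} in a high context joins $\pc=H$ into the target's level, hence leaves no low variable behind; (iii) \textsc{endif} raises to $H$, via $\mathit{update}_H$, every variable assigned in the untaken branch; and (iv) every message created by \textsc{call'} or \textsc{return} in a high context carries level $\ge H$. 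Given the sub-lemma, the low projection of the post-state equals that of the pre-state and there is no low output, so low-to-low determinism holds trivially. The wrapper rules of Figure~\ref{operational-semantics-wrapper} are handled in the same spirit: the branch taken in \textsc{w-invc}, \textsc{invc-w'}, \textsc{w-fut} and \textsc{w-get} is decided only by public data — the message's level tag, the destination object's level $\mathit{level}(o')$, and the statically declared levels $\varLambda[m,i]$ — and a variable written by \textsc{w-get} from a wrapped (hence $H$) future, or an \texttt{error} marker written after a message is eaten, is itself public, so the low projection is again low-determined.

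The main obstacle I expect is the $\pc=H$ sub-lemma, and specifically the bookkeeping that a high region is syntactically closed: one must check that nested \textsc{if-high} blocks, their \textsc{endif} markers and the surrounding continuation interleave so that the $\pcs$ stack is balanced and every write performed anywhere inside is tagged $H$ before control returns to a low context. A secondary, subtler point is the treatment of fresh identities: since \textsc{assign}, \textsc{call} and \textsc{new} tag the variable holding a new identity with $\pc$, in a low context such a variable is low-visible yet its value may legitimately differ between two runs (a previous high branch may have advanced the object counter via \textsc{call'}), so ``determined by the low pre-state'' must be read up to a bijective renaming of identities that has to be threaded through the whole induction. Everything else — the per-rule calculations for $\pc=L$ — is routine once the low/high partition and the renaming discipline are fixed.
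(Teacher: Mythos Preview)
Your proposal is correct and follows essentially the same approach as the paper: a case analysis on statements, with the crucial split being between \kw{if} with a low guard (branch determined by the low prestate) and \kw{if} with a high guard (no low state change and no low output in either branch). Your plan is considerably more detailed than the paper's two-line sketch; in particular, the sub-lemma you isolate for the $\pcs\neq\mathit{emp}$ region is exactly what the paper later spells out in the appendix as Lemma~\ref{behav}, and your discussion of fresh identities correctly anticipates the ``modulo differences in object and future identities'' caveat of Definition~\ref{glob-def}.
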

      \begin{proof} This can be proved by case analysis on the statements.
For an \kw{if} with a high test, there is no low state change
(resulting in low tags)
nor low output in 
any branch. For an \kw{if} with low test, the choice of branch is 
given by the low part of the prestate. The  cases for 
the  other statements 
are straightforward.
	\qed\end{proof}
\begin{theorem}
	Our security model 
	guarantees 
	local  and global non-interference,
and an attacker will only receive low information.
\end{theorem}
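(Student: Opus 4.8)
The plan is to establish the three claims of the theorem in sequence: local non-interference (Def.~\ref{L-Non-in}), then global non-interference (Def.~\ref{glob-def}), and finally the attacker clause.

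\emph{Local non-interference.} This is essentially a corollary of Lemma~\ref{low-to-low} by induction on the number of steps of an object's execution. I would carry through the invariant that the two runs (i) are in low-equivalent states, i.e.\ agree on the low-tagged variables of $a\#l$ and their values, and (ii) agree on the $L$/$H$ tagging of all variables --- the flow-sensitive soundness invariant in the spirit of~\cite{russo2010dynamic}. The base case is \textbf{start}, which binds low-equivalent inputs. For the step, fix a statement other than \idle; by Lemma~\ref{local-deter} a unique rule applies (for \kw{if}, in the low-test case the branch is chosen by the guard's level, which (ii) makes low-determined), and Lemma~\ref{low-to-low} then gives that the low part of the successor state and of any emitted message is a function of the low part of the prestate, preserving (i). The cases needing care are \textbf{assign}, \textbf{if-high} and \textbf{endif}: in a high context \textbf{assign} raises every taken-branch left-hand side and the $\mathit{update}_H$ of \textbf{endif} raises every untaken-branch one, so the tagging (ii) stays synchronized even though the two runs may temporarily follow different branch bodies; and since no low-tagged value is written and no low-level message is emitted while $\pc=H$, the runs cannot diverge on any low-observable quantity before the matching \textbf{endif} resynchronizes them. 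When $\pcs$ is empty the residual statement lists coincide, which is the situation in which Defs.~\ref{L-Non-in} and~\ref{glob-def} are applied.

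\emph{Global non-interference.} I would lift the above to configurations for two executions matched as in Def.~\ref{glob-def} --- corresponding objects, futures, $sl$'s and $\pcs$'s, fed low-equivalent inputs; note that the only non-determinism left by Lemma~\ref{local-deter}, namely queue ordering, is fixed by this matching. By the local result each object stays low-equivalent, so it remains to show the inter-object traffic is low-equivalent modulo identities, which reduces to each wrapper rule making the same decision in both runs. In \textbf{w-invc} and \textbf{w-get} the test compares a message or future level with a destination (resp.\ requesting) object level; object levels are fixed at creation by the $\kw{new}_{\lev}$ annotation, hence identical in both runs, and the message level is $\mathit{level}([\many{e}])\sqcup\pc$ or $\mathit{level}(d)\sqcup\pc$ as produced by \textbf{call'}/\textbf{return}, which is low-determined by the invariant of the first stage; the formal-parameter checks of \textbf{invc-w'} and the high-value test of \textbf{w-fut} are handled identically. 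Hence in both runs exactly the same messages pass their wrappers --- carrying, by the check $\lev' \sqsubseteq \level(o')$, data at or below the receiving object's level --- and exactly the same ones are eaten, producing $\kw{error}$ futures identically; messages emitted only in a high context are all high and so never delivered to a low object. Each object therefore receives low-equivalent inputs, the induction closes, and Def.~\ref{glob-def} follows.

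\emph{Attacker clause.} An attacker is a low object, so its level is $L$; any invocation it receives passed a wrapper test $\lev' \sqsubseteq L$, forcing $\lev'=L$ and low actual parameters, and any future value it reads is either a low value (an unwrapped future, via \textbf{fut-get}) or is replaced by $\kw{error}$ by \textbf{w-get}/\textbf{fut-get'}, since a wrapped future holds a high value strictly above the attacker's level. Thus only low information reaches the attacker, and by global non-interference it is identical in any two runs differing only in high inputs. The step I expect to be the main obstacle is the global stage's reliance on every wrapper test being low-determined: this amounts to promoting the flow-sensitive tagging invariant (ii) to a genuine part of low equivalence and threading it through from the start, which is exactly what forces the \textbf{if-high}/\textbf{endif}/$\mathit{update}_H$ discipline. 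The extra subtlety beyond~\cite{russo2010dynamic} is first-class futures: a future created while $\pc=H$ and later passed to, resolved by, and read from another object must be tagged and wrapped consistently in both runs, which I would settle by checking that the identity assigned in \textbf{call}, the completion level in \textbf{return}, and the wrapping decision in \textbf{w-fut} all depend only on already-low-determined data, so that invariant (ii) is preserved across the whole future life-cycle.
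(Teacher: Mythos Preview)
Your proposal is correct and follows essentially the same route as the paper: local non-interference by induction on steps using Lemmas~\ref{local-deter} and~\ref{low-to-low}, global non-interference by lifting the local result to all objects under the input-matching hypothesis of Def.~\ref{glob-def}, and the attacker clause by the wrapper rules blocking high traffic to low objects.

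Two small differences in emphasis are worth noting. First, you make explicit a tagging-agreement invariant (your (ii)) alongside value agreement on low variables; the paper's notion of low equivalence speaks only of agreement on low values, and the synchronization of levels is handled implicitly through the \textbf{if-high}/\textbf{endif}/$\mathit{update}_H$ discipline (formalized in the appendix as Lemma~\ref{behav}). Your formulation is cleaner for the global step, since it directly yields that each wrapper test is low-determined. Second, for the attacker clause the paper (in the appendix) argues \emph{semantically} via two separate results---that a wrapped future never releases its value to a low object and that a high invocation never leaves a wrapper toward a low object---whereas you derive the same conclusion \emph{syntactically} from the guard $\lev'\sqsubseteq L$ in \textbf{w-invc} and the branch in \textbf{w-get}. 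Both routes are sound; the paper's decomposition isolates reusable facts about wrappers, while yours keeps the argument inside the non-interference induction.
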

\begin{proof}
Local non-interference can be proved by induction of the number of execution steps
considering two executions of an object.
The low part of each state and the low outputs
must be the same  by the two previous lemmas. 
Global non-interference can be proved by induction on the number of steps
considering two executions. It follows by
local non-interference 
for all objects.
%
Since an attacker is a low object,
the wrappers will prevent it to receive high inputs, and the low inputs will be the same. 
	\qed\end{proof}
\vspace{-2mm}
\section{Related work}\label{related-work}
A static and class-wise information-flow analysis has been suggested for
Creol without futures 
by Owe and Ramezanifarkhani
in~\cite{oweramezani17dpm}.
The authors proved soundness and a non-interference property
in object interactions based on an operational semantics.
In contrast to the present paper, futures are not considered.
Our approach is a dynamic technique which is more permissive and
precise.
To reduce 
the run-time overhead, we combine it with a sound static analysis, 
as
the one 
in~\cite{oweramezani17dpm}.

In a paper by Attali et al.~\cite{attali2007secured}, a dynamic information-flow control approach is performed for the ASP language.
Security levels are assigned to activities and communicated data (an activity includes an active object and several passive objects
controlled by one thread).
The security levels do not change when they are assigned.
Dynamic checks are performed at
activity creations, requests, and replies.
Since future references are not confidential, they are passed between activities without dynamic checking, but
getting a future value is checked by
a reply transmission rule.
In~\cite{attali2007secured}, the security model guarantees data confidentiality for multi-level security (MLS) systems, which means that
an entity is allowed to access only the information that it is allowed to access.
Our approach adds flow-sensitivity, which allows security levels of variables to change inside an object.
It makes our approach more permissive and
a wrapper deals with run-time security levels.
In addition to enforcing the non-interference property in object
interactions, our approach guarantees that an object will be given
access only to the information that it is allowed to handle.

 Russo and Sabelfeld~\cite{russo2010dynamic} prove that a sound flow-sensitive dynamic information-flow enforcement is more permissive than static analysis.
Magazinius et al.~\cite{magazinius2010safe} use the notion of wrappers to control the behavior of
JavaScript programs and enforce security policies to protect web pages from malicious codes.
The security policies avoid web browser vulnerabilities and protect web pages.
A policy specifies under what conditions a page
may perform a specific action, and a wrapper grants, rejects, or modifies these actions. 
Moreover, the notion of wrappers has
been developed for the safety of objects~\cite{owe2009wrap},
letting an object wrapper control
which actions should be taken for any input/output. We here exploit wrappers for dealing
with information security, by extending the run-time system with security 
levels,
and applying dynamic checking for securing object interactions.

\vspace{-2mm}
\section{Conclusion}
We have proposed a  framework 
for enforcing 
secure information-flow and non-interference in active object languages
based on the notion of security wrappers.
We have considered a high-level core language 
supporting asynchronous calls and futures.
In our model, due to encapsulation, there is no need for
information-flow restrictions inside an object.
Wrappers perform security checks for object interactions 
(with methods and futures)
at run-time.
Furthermore, wrappers control the access to the futures with confidential values. 
Security rules of wrappers are defined based on security levels
assigned to objects, actual 
parameters, fields and local variables.
Inside an object, 
the security levels of variables might change at run-time due to
flow-sensitivity.
In the setting where concurrent objects communicate confidential or
non-confidential information, 
wrappers on unsafe objects and future components
protect exchange of  confidential 
values. 
Wrappers on objects protect outgoing method calls and prevent leakage of information through outgoing parameters.
The wrappers are created by the run-time system without the involved parties being aware of it.
The notion of security wrappers enable 
dynamic enforcement of avoidance  of information flow leakage.
We define non-interference  and outline a proof of  it.
By combining results from static analysis,
we can improve run-time efficiency by avoiding wrappers
when 
 they are superfluous according to 
the  over-approximation of levels  given by
the static analysis.
\bibliographystyle{plain}
\bibliography{mainref}
\newpage
\subsection{Appendix}\label{appendix}
\begin{theorem}[No leackage from a wrapped future]\label{sec-fut}
	A high future value
	will not be passed to a low object.
\end{theorem}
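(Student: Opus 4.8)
The plan is to reduce the statement to a structural invariant on reachable configurations together with a one-rule case analysis of the blocking get. First I would pin down what ``a high future value'' means in the model: by the \textbf{return} rule a completion message $(\comp(d)\ o'\ \mathit{to}\ \fId)_{\lev}$ carries the level $\lev = \level(d)\sqcup\pc$, and by the \textbf{w-fut} rule a future becomes enclosed in a wrapper \emph{exactly when} it is resolved by such a message with $\lev = H$, in which case the enclosing wrapper has level $\Lvl:H$ and the same identity $\fId$. So ``a high future value'' corresponds to a future sitting inside a wrapper $\{\Wrr:\fId,\Lvl:H \mid \fut(\fId,d)\}$, and the theorem becomes: such a value $d$ is never delivered to an object of level $L$.

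The heart of the argument is the invariant: \emph{in every configuration reachable from an initial one, (i) every future resolved by a high completion message occurs inside a wrapper of level $H$; (ii) no rule removes a $\fut$ component from its enclosing wrapper, nor duplicates it out; and (iii) the value field $d$ of a wrapped future is read or copied by no rule other than \textbf{w-get}.} I would prove this by induction on the length of the reduction, inspecting each rewrite rule. The key observations: the only rule that turns an unresolved $\fut(\fId,\und)$ into a resolved $\fut(\fId,d)$ is \textbf{w-fut}, which places the resolved future inside an $H$-wrapper precisely in the high branch and consumes the unique $\fut(\fId,\und)$, so a future is resolved at most once; an unresolved future holds $\und$ and a get on it simply blocks, so nothing can escape before wrapping; and of the rules that mention a resolved $\fut(\fId,d)$ --- \textbf{fut-get}, \textbf{fut-get'}, \textbf{w-get} --- the first requires the future to occur \emph{outside} any wrapper and the second requires the value to be \kw{error}, so only \textbf{w-get} can fire on a wrapped future. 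Since rewriting is local and no rule pattern deletes the $\{\Wrr:\fId,\ldots\mid\cdot\}$ context around a future, wrapping is stable once established.

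Given the invariant, the theorem follows by inspecting \textbf{w-get}. A get $\fId?(x)$ executed by an object $o$ of level $\lev$ against a wrapped future (wrapper level $\lev' = H$) can deliver the value $d$ only via the non-error branch, whose guard is $\neg(\lev\sqsubseteq\lev')$; for $\lev = L$ this is $\neg(L\sqsubseteq H)$, which is false, so a low object takes the error branch and obtains $x := \kw{error}$, never $d$. Hence a high future value is not passed to a low object. I would round off by fixing the scope of the claim: first-class future \emph{references} $\fId$ are tagged with $\pc$ and may be forwarded freely as ordinary (typically low) data, so the guarantee cannot rely on restricting propagation of the reference --- it rests entirely on the wrapper mediating every get --- and onward leakage by a legitimately reading high object is a separate matter, covered by the object-wrapper rules (\textbf{w-invc}, \textbf{invc-w'}) and the global non-interference theorem rather than by this lemma.

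The step I expect to be the real work is parts (ii)--(iii) of the invariant: one has to walk through the entire rule set and confirm that a wrapped future is never unwrapped, never duplicated, and that no rule --- including rewriting of the configuration nested \emph{inside} a wrapper --- ever extracts the stored value $d$ except through \textbf{w-get}. Once that closure property is in hand, the rest is a direct reading of \textbf{w-fut} and \textbf{w-get}.
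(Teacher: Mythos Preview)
Your proposal is correct and follows essentially the same line as the paper's own proof: both argue that \textbf{w-fut} places a high-valued future inside a level-$H$ wrapper, that no rule ever removes this wrapper, and that the only rule able to read a wrapped future's value is \textbf{w-get}, whose guard forces a low object into the \kw{error} branch. The paper's version is terser---it simply asserts uniqueness of the future identity, stability of the wrapper, and the \textbf{w-get} check---whereas you spell out the same facts as an explicit invariant and propose an induction over reduction length with a rule-by-rule case analysis; this is the natural elaboration of the paper's sketch rather than a different route.
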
%
\begin{proof}
	Since each call has a unique future identity,
	there is at most one write operation on a future component,
	and by the rule \textbf{w-fut} in Fig.~\ref{operational-semantics-wrapper},
	the future component is placed in a wrapper if the 
	future value is high.
	And this wrapper is not removed by any rule.
	Any get operation of such a future
	is handled by rule \textbf{w-get} in Fig.~\ref{operational-semantics-wrapper},
	and this rule will not pass such a high future value to a low
        object.
\qed
\end{proof} 
\begin{theorem}[No high message to a low level object]\label{sec-mes}
	A message with high level tag is not sent to a low level object.
\end{theorem}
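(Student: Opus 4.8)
The plan is to prove this as a confinement property by exhibiting a structural invariant on configurations and showing that every rewrite rule preserves it, then reading off the statement. The invariant I would maintain is: \emph{every invocation message $(\invocc(\fId,m,\dl)\ o\ \mathit{to}\ o')_{H}$ with a high tag that occurs anywhere in a reachable configuration is located inside the wrapper of its sender $o$ (together with the as-yet unresolved future it targets), unless it has already been released by rule \textbf{w-invc}, in which case its destination satisfies $\level(o')=H$.} (Completion messages $\comp(d)\ o'\ \mathit{to}\ \fId$ are addressed to a future, not an object, and their high case is handled by Theorem~\ref{sec-fut}, so the work here is about invocation messages.) The base case is immediate since the initial configuration contains no high-tagged messages, and the theorem then follows: a low object $o'$ has $\level(o')=L$, so it can never be the released destination of a high-tagged message, and it cannot reach into another object's wrapper.

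First I would pin down where high-tagged invocation messages originate: the only rule that creates one is \textbf{call'}, which stamps the message with $\level([\many e])\sqcup\pc$, and no rule ever rewrites the tag of an existing message. Hence a tag equal to $H$ can only arise inside a \textbf{flowsen-obj} (the only component carrying a $\PC$ field) and only when some actual parameter is high or $\pc=H$. Second, I would use rule \textbf{new}: every object created from an \textbf{unsafe-class} is placed inside a wrapper carrying its identity and level, while, by soundness of the static classification, any object that is \emph{not} so wrapped comes from a safe class and therefore never executes a call with a high actual parameter nor in a high context, so it never produces a high tag. Combined with the fact that no rule dissolves a wrapper around an object, this anchors every freshly created high-tagged message inside its sender's wrapper.

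Third, I would trace the life of such a message. While inside the wrapper it can only be moved by \textbf{w-invc} (the inner configuration is otherwise inert with respect to it). That rule releases the message to $o'$ only under the guard $\lev'\sqsubseteq\level(o')$; with $\lev'=H$ this forces $\level(o')=H$, and when the guard fails the message is consumed and the target future is set to $\fut(\fId,\kw{error})$, so it is never ``sent'' anywhere. If $o'$ is itself wrapped, the released message is next handled by \textbf{invc-w'}/\textbf{invc-q}, which re-confine it inside $o'$'s wrapper and queue; this preserves the invariant (the destination is already known to be high). Checking that every remaining rule---\textbf{w-fut}, \textbf{w-get}, \textbf{fut-get}, \textbf{fut-get'}, \textbf{start}, and the flow-sensitivity rules of Fig.~\ref{operational-semantics-flowsen}---either does not touch invocation messages or preserves the invariant completes the induction on the number of execution steps.

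The main obstacle I anticipate is the bookkeeping around nested wrappers together with the reliance on the static analysis: I must be certain a high-tagged message cannot ``leak out'' of a wrapper by any route other than \textbf{w-invc}, which requires a careful enumeration of how configurations decompose (since rules apply to inner configurations too), and I must make precise the well-formedness assumption that every object not enclosed in a wrapper is an instance of a statically safe class, so that no high tag is ever produced outside a wrapper in the first place. Once that invariant is stated with care, the case analysis is routine and parallels the argument already used for Theorem~\ref{sec-fut}.
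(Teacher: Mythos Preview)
Your proposal is correct and rests on the same two pillars as the paper's own proof: the guard in rule \textbf{w-invc} blocks high invocation messages from reaching low destinations (and the wrapper is never removed), while high completion messages are handled by rule \textbf{w-fut} together with Theorem~\ref{sec-fut}. The paper's argument is essentially just those two sentences.

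Where you differ is in rigor rather than strategy. The paper tacitly assumes that every high-tagged invocation message lives inside a wrapper and simply points at \textbf{w-invc}; you instead formulate this as an explicit structural invariant and discharge it by induction on execution steps, with a case analysis over all rules. In particular, you make explicit something the paper glosses over entirely: that a high tag can only be produced by rule \textbf{call'} inside a \textbf{flowsen-obj}, and that every such object is enclosed in a wrapper by rule \textbf{new}, with the soundness of the static classification guaranteeing that unwrapped (safe-class) objects never generate a high tag. This is a genuine gap in the paper's argument that your invariant closes. What you gain is a proof that actually accounts for all possible origins and trajectories of high messages; what the paper's version buys is brevity at the cost of leaving the well-formedness assumption and the role of static soundness unstated.
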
%
\begin{proof}
	According to the \textbf{w-invc} rule in Fig.~\ref{operational-semantics-wrapper},
	an invocation message is only sent out of a wrapper if the level of the destination object is higher than the level of the message. Otherwise, the invocation message is deleted from inside the wrapper.
	This wrapper is not removed by any rule.
	For a completion message, if
	the level of the completion message is high, then the corresponding future becomes wrapped (rule \textbf{w-fut} in Fig.~\ref{operational-semantics-wrapper}).
	According to the Theorem~\ref{sec-fut}, there is no leakage from a wrapped future.
	\qed
\end{proof} 
We define low outputs generated from an object as:
\begin{definition}(Low outputs)
	The low output of a call statement performed by an
	object is 
	the generated low invocation message. 
	An invocation message is low if it does not contain any actual parameter with high security level, and it is not generated in a high context where $\pc=H$ (rule \textbf{call'} in Fig.~\ref{operational-semantics-flowsen}).
	The low output of a new statement is given by the actual class
	parameters.
	The observable output of a return statement
	is the low completion message. 
	A completion message is low if the return value is low, and it is not generated in a high context (rule \textbf{return} in Fig.~\ref{operational-semantics-flowsen}).
\end{definition}
Non-interference means that
in two executions, if we change the high inputs but not the low ones, the resulting low outputs for the attacker must be the same.
In our security approach, an attacker does not get any high message because of the security wrappers (proved by Theorem~\ref{sec-mes}),
but it can receive low messages or low outputs from other objects.
In order to prove non-interference, it suffices to show that the low outputs of the two executions are low equivalent, i.e., they agree on low variables, and low outputs are not resulted from high information.

By Lemma~\ref{local-deter}
we prove that each object is \textit{locally
deterministic}, in the sense that
the next state of a statement (other than idle) only depends on
the prestate and any input.
We prove by Lemma~\ref{low-to-low} that objects are low-to-low determinism, where the low part of the next state of a statement is determined by the low part of
the prestate or any input.
One interesting case for Lemma~\ref{low-to-low} could be that when inside an object $\pcs\neq \mathit{emp}$ (or $\pc=H$), then after each statement, all the state changes are high, and the object produces no low outputs. It is proved in Lemma~\ref{behav}.

In order to show the behavior of an object when $\pcs\neq \mathit{emp}$ and non-interference, we let $R$ denote a run-time configuration,
and
$R/o$ the projection of the local run-time configuration of  object $o$.
To conserve space, $R/o$ is a compact notation of a flow-sensitive object component (\textit{flowsen-obj}) in Fig.~\ref{fig:object-model}, and it contains attributes like the identity ($o$), the current statement that the object is about to perform ($\mathit{sl}$), the \emph{current state} of the object ($\sigma=a\#l$), the stack of $\pcs$, and the rest is indicated by [...]. 
For example, $R/o= <o  |  \mathit{sl}, \sigma,..., \pcs>$ is $o$'s configuration, where $\mathit{sl}$ is a list of statements to be performed by the object, $\sigma$ is the current state of the object, a map composition of the attribute state $a$ and local state $l$ ($\sigma=a\#l$), and $\pcs$ is the stack of context security level.
According to the operational semantics in Fig.~\ref{operational-semantics-flowsen}, variables are stored in the sate of an object with their values and security levels, where $\sigma^{\mathit{se}}$ denotes the \textit{security level environment}, mapping variables to security levels,
and $\sigma^{m}$ denotes the object \textit{memory}, mapping variables to values.
One step of execution is denoted by:
$<o  | s, \sigma,...,\pcs>\longrightarrow_{\alpha }<o | s', \sigma',...,\pcs'>$,
where $\alpha$ is the possible produced output, $\sigma$ is the prestate state, and $\sigma'$ is the next state.
A sequence of execution steps inside an object is denoted by $\longrightarrow_{\alpha}^{*}$,
where $\alpha$ is a (possible empty) list of produced concatenated outputs (including outgoing invocation and completion messages, created futures, and objects).
 The relation $\alpha_{1}<\alpha_{2}$ means that $\alpha_{1}$ is a strict prefix of $\alpha_{2}$, and
 $\mathit{low}(\alpha)$ denotes the projection of low outputs from $\alpha$.

The next lemma describes the behavior of an active flow-sensitive object when $\pcs\neq\mathit{emp}$ (which corresponds to $\pc=H$). We say that two states, inputs, or outputs of an 
object  are low equivalent ($\approx_{L}$) if
they agree on the low values of the variables,
while variables with high values may differ.
%
\begin{lemma} \label{behav}
	Given a statement $s$ that contains no \textit{endif} instruction, $\pcs\neq \mathit{emp}$, and $<o| s, \sigma, ...,\mathit{pcs}> \longrightarrow_{\alpha}^{*} <o |\mathit{idle},\sigma', ..., \mathit{pcs'}>$ then:
	\begin{enumerate}[label=(\roman*)]
		\item $pcs'=pcs $
		\item $\sigma'=\sigma \sqcup \mathit{update}_{\mathit{H}}(s)$
		\item $\sigma'\approx_{L} \sigma$ (The post and prestate are low equivalent.)
		\item $\mathit{low}(\alpha)=\emptyset$
	\end{enumerate}
	where $\sigma \sqcup \mathit{update}_{\mathit{H}}(s)$ raises the security levels of variables appearing on the left-hand-side of assignments in the statement $s$ and updates them in the object's state. 
\end{lemma}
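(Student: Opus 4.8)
The plan is to prove Lemma~\ref{behav} by induction on the length of the execution sequence $\longrightarrow_{\alpha}^{*}$, using the structure of the statement $s$ as a secondary measure. The base case is $s = \mathit{idle}$, where all four claims hold trivially ($\alpha$ is empty, $\sigma' = \sigma$, $\mathit{pcs}' = \mathit{pcs}$, and $\mathit{update}_{H}(\mathit{idle})$ contributes nothing). For the inductive step I would do a case analysis on the first statement executed, i.e.\ on which rule of Fig.~\ref{operational-semantics-flowsen} fires, knowing that by hypothesis $\mathit{pcs} \neq \mathit{emp}$ so $\pc = H$ throughout, and that $s$ contains no top-level \textit{endif}.

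The key cases are: (a) \textbf{assign} ($x := e$): since $\pc = H$, the rule inserts $x$ with level $\mathit{level}([e]) \sqcup \pc = H$, so the new binding is high; this changes only high data, hence $\sigma'' \approx_L \sigma$ is preserved, and the cumulative effect matches $\sigma \sqcup \mathit{update}_H(x := e)$; no output is produced. (b) \textbf{call'} / \textbf{call}: the generated invocation message carries tag $\mathit{level}([\many{e}]) \sqcup \pc = H$, so $\mathit{low}(\alpha)$ gains nothing; the \textbf{call} rule only adds a binding for the future variable $q$, which (being written under $\pc = H$) is tagged $H$. (c) \textbf{return}: the completion message is tagged $\mathit{level}(d) \sqcup \pc = H$, contributing nothing to $\mathit{low}(\alpha)$, and the object goes \textit{idle}. (d) \textbf{new}: likewise the new object/future identities are written under $\pc = H$, and the actual class parameters are treated as in \textbf{call'}, so the spawned messages are high. (e) \textbf{if-high}: a nested \kw{if} with a high-or-contextually-high guard pushes $H$ and inserts a matching \textit{endif} into the statement list; here I need to argue that this new \textit{endif} is eventually consumed by the \textbf{endif} rule within the same execution prefix, restoring $\mathit{pcs}$ to its previous value and applying $\mathit{update}_H$ to the untaken branch — so that after the whole sub-\kw{if} completes, the net stack change is zero and the net state change is exactly $\sigma \sqcup \mathit{update}_H$ of both branches. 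Crucially, the hypothesis ``$s$ contains no \textit{endif}'' must be maintained as an invariant for the continuation: after consuming the first statement, the remaining list is either still \textit{endif}-free (cases a--d) or becomes $\sll'; \mathit{endif}(\sll''); \sll$ (case e), and for the latter I apply the induction hypothesis to the \textit{endif}-free prefix $\sll'$ and then handle the \textbf{endif} rule explicitly before recursing on $\sll$. The \textbf{if-low} rule cannot fire because its side condition requires $\mathit{pcs} = \mathit{emp}$.

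I expect the main obstacle to be the bookkeeping in case (e): making precise that a nested \kw{if} under a high context always runs to its \textit{endif} without the intervening steps producing low output or permanently altering the stack, and that $\mathit{update}_H$ composed over the taken branch, the \textit{endif}'s $\mathit{update}_H$ of the untaken branch, and the continuation equals $\mathit{update}_H(s)$ as a whole. This is essentially showing that $\mathit{update}_H$ distributes over sequential composition and that the operational treatment of \kw{if} (execute one branch, then raise the variables of the other via \textit{endif}) agrees with the static $\mathit{update}_H(\kw{if}\ e\ \kw{th}\ \sll'\ \kw{el}\ \sll''\ \kw{fi})$. Claims (i) and (iv) then follow directly from the case analysis (every write and every message is tagged $H$ because $\pc = H$), claim (ii) is the distributivity argument just described, and claim (iii) is an immediate consequence of (ii) since $\mathit{update}_H$ only raises levels and never changes the low projection of the state.
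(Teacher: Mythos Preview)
Your proposal is correct and follows essentially the same strategy as the paper: induction on $\longrightarrow^{*}$ with a case analysis on the operational rule that fires, exploiting that $\pc=H$ forces every written binding and every emitted message to be tagged $H$. The paper's proof is somewhat less explicit than yours about the stack-balancing and the distributivity of $\mathit{update}_{H}$ in the nested-\kw{if} case, so your bookkeeping in case~(e) is an improvement in rigor; conversely, the paper also treats the get case $\fId?(x)$ (which reduces to an assignment via \textbf{fut-get}/\textbf{w-get}), so you should add that case for completeness.
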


\begin{proof}
	by induction on $\longrightarrow^{*}$, we show it with some case analysis and the operational semantics in Fig.~\ref{operational-semantics-flowsen} (where in the rules, as mentioned before, fields that are not changed on the right-hand-side are denoted by ....):
	\begin{itemize}
		\item[--] $x:=e;$\\ According to the \textbf{assign} rule, item (\textit{i}) holds since an assignment does not change the $pcs$ stack.
		Item (\textit{ii}) holds since $\pc=H$, and $\sigma'= \sigma[x \mapsto [e]_{\mathit{level}([e])\sqcup pc}] =\sigma [x\mapsto [e]_\mathit{H}]\Rightarrow \sigma'=\sigma \sqcup \mathit{update}_{H}(x:=e) $. Item (\textit{iii}) holds since there is no low state change resulting in a low tag, and the values of low variables  in the post state $\sigma'$ and  the prestate $\sigma$ are the same. An assignment produces no output, thus item (\textit{iv}) holds.
		\item[--] \kw{return} $d$;\\
		According to the \textbf{return} rule, item (\textit{i}) holds since it does not change the $\mathit{pcs}$. $\mathit{update_{H}(\kw{return}\ d)}=\emptyset$, and item (\textit{ii}) holds.
		 This statement does not change the object's state, so the post and pre-state are low equivalent, and item (\textit{iii}) holds .
		Since $\pc=H$, the \textbf{return} statement generates a completion message with a high security level, and item (\textit{iv}) holds .
		\item[--] $q! o'.m(\many{e})$; \\According to the \textbf{call} rule, this statement is reduced to $q:=\newfId;! o'.m(\many{e})$.
		The first statement is an assignment; therefore, all the items hold, and we have $\sigma'=\sigma\sqcup\mathit{update}_{\mathit{H}}(q:=\newfId)$.
		For the second statement, according to the \textbf{call'} rule, item (\textit{i}) holds since it does not change the $\mathit{pcs}$.
		A call statement does not change the object's state, so the post and pre-state are low equivalent, and items (\textit{ii, iii}) hold, where $\mathit{update_{H}(! o'.m(\many{e}))}=\emptyset$.
		Item (\textit{iv}) holds, since  $\pc=H$, and this statement creates an invocation message with a high security level.
		\item[--] $\kw{if}\ e \ \kw{th}\ s_1 \ [\kw{el}\ s_2]\kw{fi}$;\\ 
		Since $\pcs\neq \mathit{emp}$, the \textbf{if-high} rule is applied.
		According to this rule, $H$ is pushed to the $\pcs$; however, at the joint point, the added level is removed from the $\pcs$ ($\pcs.\mathit{pop()}$) by the \textbf{endif} rule.
		Therefore, at the end of an \textbf{if} structure, we have $\pcs'=\pcs$, and item (\textit{i}) holds.
		The taken branch is evaluated in a high context, 
		and the security levels of variables in the untaken branch are raised by $H$ in the \textbf{endif} rule. 
		Therefore, at the end of an if structure, item (\textit{ii}) holds, where $\sigma'=\sigma\sqcup\mathit{update}_{\mathit{H}}(s_1, s_2)$.
		Item (\textit{iii}) holds, since the \textbf{if-high} rule only results in high state changes, so $\sigma'\approx_{L} \sigma$.
		Item (\textit{iv}) holds since $\pc=H$; therefore, any output generated in either of the branches has a high security level.
%
%
%
		\item[--] $x:=\kw{new}_{lev}C'(\many{e})$; \\According to the \textbf{new} rule, this statement is reduced to $x:=\mathit{newId}; !x.\mathit{init}(\many{e})$, where for the first statement, all the items hold similar to an assignment. For the second statement, similar to a call (when $\pc=H$) all the items hold.
		\item[--] $\mathit{fId}?(x)$ (a get statement on the future $\mathit{fId}$);\\ According to the \textbf{fut-get} rule in Fig.~\ref{operational-semantics-wrapper}, this statement reduces to $x:=d$, where $d$ is the future value. Items (\textit{i, ii, iii}) hold similar to an assignment. Item (\textit{iv}) holds since it produces no output.
		This statement might raise an exception in case the object's level is low and the the future value is high as shown in the \textbf{w-get} in Fig.~\ref{operational-semantics-wrapper}.
	\end{itemize}
	\qed\end{proof}
Next we formalize and prove local non-interference, defined in Definition~\ref{L-Non-in}.
 $o_{1}$ and $o_{2}$ are the corresponding objects in two executions of an object.
\begin{lemma}(Local Non-Interference for Objects). \label{non-inter-formal}
	\begin{align*}
	&  <o_{1}|\mathit{sl}, \sigma_{1},..., \pcs>\longrightarrow_{\alpha_{1}}^{*} 
	<o_{1} |\mathit{sl'},\sigma_{1}',..., \pcs'>  \wedge \\ 
	& <o_{2} | \mathit{sl}, \sigma_{2},...,\pcs>\longrightarrow_{\alpha_{2}}^{*} <o_{2} |\mathit{sl''}, \sigma'_{2},...,\pcs''> \wedge \ \sigma_{1}\approx_{L}\sigma_{2} \Rightarrow\\ 
	& \sigma'_{1}\approx_{L}\sigma'_{2}\ \wedge \ \pcs''=\pcs' \ \wedge \ \mathit{low}(\alpha_{2} )= \mathit{low}(\alpha_{1} ) 
	\end{align*}
\end{lemma}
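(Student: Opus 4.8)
The plan is to prove Lemma~\ref{non-inter-formal} by induction on the length of the two (matched) reduction sequences, with a case analysis on the head statement of $\mathit{sl}$, using Lemma~\ref{low-to-low} for the step-by-step low-context cases and Lemma~\ref{behav} to collapse each high-context \kw{if}-block into a single macro-step. Throughout I carry the convention, implicit in Definition~\ref{L-Non-in}, that the inputs consumed along the two runs (future values read by a get, messages taken off the queue by \textbf{start}) are low equivalent, so a step reading a \emph{low} value reads the same value in both runs while a high value may differ; and, as in Definition~\ref{glob-def}, all equalities involving object or future identities are understood modulo a bijective renaming, since a fresh identity assigned to a variable in a low context may differ between the runs. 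The base case (empty reduction) is immediate from $\sigma_{1}\approx_{L}\sigma_{2}$.

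For the inductive step I distinguish three families of head statements. (a) A non-branching statement — $x:=e$, $\kw{return}\,d$, $!e.m(\many{e})$, $q!e.m(\many{e})$, $x:=\kw{new}_{\lev}\,c'(\many{e})$, or $\fId?(x)$: by Lemma~\ref{local-deter} there is exactly one applicable rule in each run, and by Lemma~\ref{low-to-low} the low part of the successor state, the stack $\pcs$, and the low part of the emitted output are functions of the low part of the prestate (and of the consumed input). Hence after one step the states are again low equivalent, the $\pcs$ stacks agree, and the low outputs agree; in particular \textbf{call'} with $\pc=L$ and only low actuals emits the \emph{same} low invocation message in both runs, while with $\pc=H$ or some high actual the message is tagged $H$ and contributes nothing to $\mathit{low}(\cdot)$, and symmetrically for \textbf{return}. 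The induction hypothesis finishes this case. (b) $\kw{if}\ e\ \kw{th}\ s_{1}\ [\kw{el}\ s_{2}]\ \kw{fi}$ with $\pcs=\mathit{emp}$ and $\level([e])=L$: since the guard is low and $\sigma_{1}\approx_{L}\sigma_{2}$ it evaluates to the same Boolean in both runs, so \textbf{if-low} selects the same branch, reducing both runs to a common, strictly shorter continuation to which the induction hypothesis applies. (c) $\kw{if}\ e\ \kw{th}\ s_{1}\ [\kw{el}\ s_{2}]\ \kw{fi}$ with $\level([e])=H$ or $\pcs\neq\mathit{emp}$: \textbf{if-high} fires on both sides, the runs may now take different branches, and I apply Lemma~\ref{behav} to the \kw{if}-statement itself in each run (its hypothesis of containing no \textbf{endif} holds of the source syntax). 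It yields $\sigma_{i}'=\sigma_{i}\sqcup\mathit{update}_{H}(s_{1},s_{2})$ with $\sigma_{i}'\approx_{L}\sigma_{i}$, restores $\pcs$ to its value before the block (the same in both runs), and produces no low output. Since $\mathit{update}_{H}(s_{1},s_{2})$ raises to $H$ exactly the variables written on the left of an assignment in $s_{1}$ or $s_{2}$ — the \emph{same} syntactic set in both runs — and low equivalence is preserved when the same set of variables is raised to $H$ in two low-equivalent states, the post-states of the block are low equivalent, the $\pcs$ stacks coincide, and no low output is produced on either side; the induction hypothesis then handles the continuation $\mathit{sl}$ after the block.

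The main obstacle is case (c): the two runs may execute syntactically unrelated code in the two branches, yet their low behaviour must coincide. This is precisely what the \textbf{endif}/$\mathit{update}_{H}$ discipline buys — a variable written only in the \emph{untaken} branch is still lifted to $H$, so the low ``footprint'' of the \kw{if} is symmetric — and it is what Lemma~\ref{behav}(i)--(iv) packages, which is why that lemma is established first. Two secondary points deserve care and should be stated explicitly in the write-up: first, divergent high branches may advance the object counter by different amounts, so fresh future or object identities later created in a low context may differ between the runs, which is exactly why $\approx_{L}$ and global non-interference are taken modulo identity renaming; second, because the two runs are \emph{not} micro-step synchronized through a high \kw{if} (the branches can have different lengths), the induction should be set up on a measure in which a high \kw{if}-block counts as a single step, with Lemma~\ref{local-deter} ensuring that outside idle states the shape of the computation is otherwise pinned down by the code.
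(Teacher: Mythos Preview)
Your proposal is correct and follows essentially the same route as the paper: induction on the number of execution steps with a case analysis on the head statement, using Lemmas~\ref{local-deter} and~\ref{low-to-low} for the non-branching and low-guard cases and Lemma~\ref{behav} (together with the $\mathit{update}_{H}$ symmetry) for the high-guard \kw{if}. Your write-up is in fact more explicit than the paper's on two points the paper leaves implicit --- the modulo-renaming treatment of fresh identities and the need for a macro-step induction measure through high \kw{if}-blocks --- and one small precision to keep in mind is that in case~(c) with $\pcs=\mathit{emp}$ and $\level([e])=H$, Lemma~\ref{behav} is applied to the taken branch \emph{after} \textbf{if-high} has pushed $H$ (so that its hypothesis $\pcs\neq\mathit{emp}$ holds), with \textbf{endif} then restoring the stack and performing $\mathit{update}_{H}$ on the untaken branch.
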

\begin{proof}
	It can be proved by induction of the number of execution steps
	considering two executions of an object with wrappers and corresponding futures.
	The low part of each state and the low outputs
	must be the same  by Lemmas  \ref{local-deter}, \ref{low-to-low}.
	It is obvious that according to Lemma~\ref{behav}, if $\pcs\neq \mathit{emp}$ then there is no low state change or output in both objects. 
	The proof is shown for some statements when $\pcs=\mathit{emp}$.
	\begin{itemize}
		\item[--] $\kw{return}\ d$;
		{\small	\begin{align*}
			<o_{1}| \kw{return}\ d, \sigma_{1},..., \pcs>&\longrightarrow_{\alpha_{1}} <o_{1}| \mathit{idle},\sigma'_{1},..., \pcs'> \\
			<o_{2}| \kw{return}\ d,\sigma_{2},..., \pcs>&\longrightarrow_{\alpha_{2}} <o_{2}| \mathit{idle},\sigma'_{2},..., \pcs''> 
			\end{align*}}
		Since a return statement does not change an object's state or the $\pcs$, we have $\sigma_{1}=\sigma'_{1}\approx_{L}\sigma'_{2}=\sigma_{2}$ and $\pcs''=\pcs'$.  
		If $\mathit{level}(d)=\mathit{L}$, then $d$ has the same value in both states because $\sigma_{1}\approx_{L}\sigma_{2}$, i.e., $\sigma_{1}^{m}(d)=\sigma_{2}^{m}(d)$ ($\sigma^{m}(d)$ returns the value of $d$ stored in $\sigma$).
		Correspondingly, the objects produce low completion messages, where $ \mathit{low}(\alpha_{2} )= \mathit{low}(\alpha_{1} )$.
		Moreover, the corresponding futures also become low equivalent.
		If $\mathit{level}(d)=H$, then both objects produce high completion messages, where $\mathit{low}(\alpha_{2} )= \mathit{low}(\alpha_{1} )=\emptyset$.
		The corresponding futures become high and wrapped correspondingly.
		\item [--] $\kw{if} \ e \ \kw{th} \ s_{1}\ \kw{el} \ s_{2}$;\\ 
		Depending on the level of $e$, either the \textbf{if-low} rule or the \textbf{if-high} rule in Fig.~\ref{operational-semantics-flowsen} is applied.
		If $\mathit{level}(e)=L$, then $\sigma_{1}^{m}(e)=\sigma_{2}^{m}(e)$, and both objects take the same branch, thus $\sigma'_{1}\approx_{L}\sigma'_{2}$ holds.
		The \textbf{if-low} rule does not change the $\pcs$, and  $\pcs''=\pcs'$ holds.
		If $\mathit{level}(e)=H$, the \textbf{if-high} rule pushes $H$ to the $\pcs$, resulting $\pc=H$.
		Since $\mathit{level}(e)=H$, we have $\sigma_{1}^{m}(e)\neq \sigma_{2}^{m}(e)$, and objects take different branches
		but produce no low outputs, where $\mathit{low}(\alpha_{2} )= \mathit{low}(\alpha_{1} )=\emptyset$.
		It is proved by Lemma~\ref{behav} that when $\pc=H$, an object does not produce a low output.
		Moreover, we have $\sigma'_{1}\approx_{L}\sigma'_{2}$, it follows from the fact that in the \textbf{if-high} and \textbf{endif} rules, the security levels of variables appearing on the left-hand-side of assignments in both branches are raised to high. Thus states of both objects remain low equivalent after updating, i.e.,
		$\sigma_1\approx_L \sigma_2 \Rightarrow \sigma_1 \sqcup \mathit{update}_{H}(s_{1},s_{2})\approx_{L} \sigma_{2} \sqcup \mathit{update}_{H}(s_{1},s_{2}) \Rightarrow \sigma'_{1}\approx_{L}\sigma'_{2}$.
		According to the \textbf{endif} rule, the $\pcs$ stacks of objects return back to the previous context level, and $\pcs''=\pcs'=\pcs$ holds. 
		\item [--] $!o'.m(\many{e});s'$;
		{\small	\begin{align*}
			<o_{1}|!o'.m(\many{e});s', \sigma_{1},...,\pcs>&\longrightarrow_{\alpha_{1}} <o_{1}| s',\sigma'_{1},...,\pcs'>\ \\
			<o_{2}| !o'.m(\many{e});s', \sigma_{2},...,\pcs>&\longrightarrow_{\alpha_{2}} <o_{2}| s', \sigma'_{2},...,\pcs''>
			\end{align*}}
According to the \textbf{call'} rule in Fig.~\ref{operational-semantics-flowsen}, since a call statement does not change an object' state and the $\pcs$, we have $\sigma_{1}=\sigma'_{1}\approx_{L}\sigma'_{2}=\sigma_{2}$ and $\pcs''=\pcs'=\pcs$.
		 If $\mathit{level}([\many{e}])=L$, it means that all the actual parameters in $[\many{e}]$ are low level, and we have $\sigma_{1}^{m}(\many{e})=\sigma_{2}^{m}(\many{e})$.
		Therefore, the objects produce low invocation messages, where $\mathit{low}(\alpha_{2} )= \mathit{low}(\alpha_{1} )$. If $\mathit{level}([\many{e}])=H$, then the objects produce high invocation messages, and
		according to the \textbf{w-invc} rule in Fig.~\ref{operational-semantics-wrapper}, wrappers do not send these high messages to a low level object.
		\item[--] $x:=\kw{new}_{lev}C'(\many{e})$;\\
		According to the \textbf{new} rule in Fig.~\ref{operational-semantics-flowsen}, this statement is reduced to $x:=\mathit{newId}; !x.\mathit{init}(\many{e})$.
		The first and the second statements do not change the $\pcs$, thus $\pcs''=\pcs'=\pcs$ holds.
		The first assignment assigns a new object identity to $x$, thus the states of objects remain low equivalent  $\sigma'_{1}\approx_{L}\sigma'_{2}$.
		In the second statement, a call does not change the states of the objects, thus $\sigma'_{1}\approx_{L}\sigma'_{2}$ holds. If $\mathit{level}(\many{e})=L$, then the objects produce the same low invocation messages and $\alpha_{1}=\alpha_{2}$. Otherwise, the invocation messages are high, and if the new objects' levels are low, then the wrappers do not send these messages to the new objects.  
	\end{itemize}
	\qed \end{proof}

\begin{theorem}
	Our security model 
	guarantees 
	local non-interference.
	\label{theo-ln}
\end{theorem}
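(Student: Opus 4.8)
The plan is to obtain Theorem~\ref{theo-ln} directly from Lemma~\ref{non-inter-formal}, which already states local non-interference at the level of a single object's transition sequence. Given two executions of the same object starting from low-equivalent states $\sigma_1 \approx_L \sigma_2$, with the same remaining statement list $\mathit{sl}$ and the same stack $\pcs$, and consuming low-equivalent inputs, I would invoke Lemma~\ref{non-inter-formal} to conclude $\sigma_1' \approx_L \sigma_2'$, $\pcs'' = \pcs'$, and $\mathit{low}(\alpha_1) = \mathit{low}(\alpha_2)$; the first conjunct is precisely Definition~\ref{L-Non-in}. So the theorem is the ``named'' packaging of that lemma, and the substantive work is delegated to the lemma and its supporting results.

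For Lemma~\ref{non-inter-formal} itself I would argue by induction on the number of steps of $\longrightarrow^{*}$ in the first execution, with a case split on the head statement of $\mathit{sl}$ according to the rules of Figures~\ref{operational-semantics-flowsen} and~\ref{operational-semantics-wrapper}. The base case ($\mathit{idle}$, zero steps) is immediate. In the inductive step, Lemma~\ref{local-deter} ensures that a non-idle statement fixes the applicable rule (the two \kw{if}-rules being discriminated only by the security level of the guard), and Lemma~\ref{low-to-low} gives that the low part of the post-state and any low output depend only on the low part of the prestate; since $\sigma_1 \approx_L \sigma_2$, these agree in both runs, and the induction hypothesis handles the suffix. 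The \textbf{assign}, \textbf{call}/\textbf{call'}, \textbf{new}, \textbf{return}, \textbf{fut-get} and \textbf{if-low} cases are routine: either the step is syntactically deterministic, or a low guard/parameter/return value evaluates identically in both states, while a high one yields a high-tagged output that is filtered symmetrically by the wrappers (rules \textbf{w-invc}, \textbf{w-fut}, \textbf{w-get}), so $\mathit{low}(\alpha_1) = \mathit{low}(\alpha_2)$ on that output.

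The main obstacle is the high-context \kw{if} case, where the two runs may take different branches. I must show (a) neither branch produces a low-tagged state change or a low output, and (b) the states are still low equivalent at the join point despite the divergence. Both follow from Lemma~\ref{behav}: a statement executed with $\pcs \neq \mathit{emp}$ produces only high updates, namely $\sigma' = \sigma \sqcup \mathit{update}_H(s)$, leaves $\pcs$ unchanged at the matching \textbf{endif}, emits no low output, and keeps $\sigma' \approx_L \sigma$. Applying it to both branches $s_1,s_2$ and using that $\sqcup$ and $\mathit{update}_H(s_1,s_2)$ are monotone with respect to $\approx_L$ gives $\sigma_1' = \sigma_1 \sqcup \mathit{update}_H(s_1,s_2) \approx_L \sigma_2 \sqcup \mathit{update}_H(s_1,s_2) = \sigma_2'$. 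Two points require attention: tracking the nesting of $\mathit{endif}$ markers so the ``$s$ contains no $\mathit{endif}$'' hypothesis of Lemma~\ref{behav} is met on each subexecution, and checking that the exception behaviour of \textbf{w-get}/\textbf{fut-get'} (replacing a blocked get by $x := \kw{error}$) opens no low channel --- the assigned value is a fixed constant carrying no high data, hence trivially low equivalent to itself across the two runs.
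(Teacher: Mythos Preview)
Your proposal is correct and follows essentially the same approach as the paper: the theorem is obtained directly from Lemma~\ref{non-inter-formal} (the paper's proof is the single line ``By induction of the number of execution steps and Lemma~\ref{non-inter-formal}''), and your sketch of how Lemma~\ref{non-inter-formal} is established---induction on execution length, case analysis on the head statement via Lemmas~\ref{local-deter} and~\ref{low-to-low}, with Lemma~\ref{behav} handling the divergent high-guard \kw{if} branches through $\sigma \sqcup \mathit{update}_H(s_1,s_2)$---matches the paper's own argument for that lemma. Your additional remarks on tracking nested $\mathit{endif}$ markers and on the \kw{error} value from \textbf{w-get}/\textbf{fut-get'} are sensible sanity checks that the paper leaves implicit.
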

\begin{proof}
	By induction of the number of execution steps and lemma~\ref{non-inter-formal}.
	\qed\end{proof}
Next we formalize and prove global non-interference, defined in the definition~\ref{glob-def}.
At the global view, due to non-determinism, two
corresponding objects might receive input messages in different orders.
Thus, it is assumed that the executions start when both objects start with the same low equivalent input messages and the queries on futures are also low equivalent.
In order to describe non-determinism, we need to consider the set of all possible executions, where two corresponding objects will take possible messages out of their queues, for which we use an existential quantifier.
The notation $R/o_{1}\approx R/o_{2}$ indicates two corresponding objects in the two executions.
The notation $R/o \nRightarrow_{L}$ indicates that this configuration does not produce any low output.
\begin{lemma}(Global Non-Interference for Objects). \label{glob-formal}
\begin{align*}
&<o_{1}|s, \sigma_{1},...,\pcs>\approx<o_{2} | s, \sigma_{2},...,\pcs> \ \wedge \ \sigma_{1}\approx\sigma_{2}  \\
& \wedge <o_{1}|s, \sigma_{1},...,\pcs>\longrightarrow_{\alpha_{1}}^{*} 
<o_{1} |s',\sigma_{1}',...,\pcs'>\ \Rightarrow \ \exists \ s'', \sigma'_{2}, \pcs'', \alpha_{2} \ .\\
& 
 <o_{2} | s, \sigma_{2},...,\pcs>\longrightarrow_{\alpha_{2}}^{*} <o_{2} |s'', \sigma'_{2},...,\pcs''>\wedge \ \sigma'_{1}\approx_{L}\sigma'_{2}\ \wedge \ \pcs''=\pcs'
 \\
&    \wedge \ \mathit{low}(\alpha_{2} )\leq \mathit{low}(\alpha_{1} )\ \wedge \
 \mathit{if}\ \mathit{low}(\alpha_{2} )< \mathit{low}(\alpha_{1} ) \ \mathit{then}  <o_{2} |s'', \sigma'_{2},...>\nRightarrow_{L}
\end{align*}
\end{lemma}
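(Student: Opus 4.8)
The plan is to prove this by induction on the number of execution steps $o_1$ takes in the left-hand derivation $<o_1\,|\,s,\sigma_1,\ldots,\pcs>\longrightarrow_{\alpha_1}^{*}<o_1\,|\,s',\sigma_1',\ldots,\pcs'>$. For the base case of zero steps, I choose the empty derivation for $o_2$, so that $\alpha_2=\alpha_1$ is empty, $s''=s$, $\pcs''=\pcs'=\pcs$, and $\sigma_2'=\sigma_2\approx_{L}\sigma_1=\sigma_1'$, which discharges all conjuncts (the $\mathit{low}(\alpha_2)<\mathit{low}(\alpha_1)$ premise being false). For the inductive step I would split $o_1$'s derivation into its first $n$ steps followed by one last step, apply the induction hypothesis to the first $n$ steps to obtain a matching $o_2$-derivation with low-equivalent states, equal $\pcs$ stacks, $\mathit{low}(\alpha_2)\leq\mathit{low}(\alpha_1)$ and the strict-prefix clause, and then extend $o_2$'s derivation by a case analysis on the rule used in $o_1$'s last step. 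Since Lemma~\ref{behav} only governs whole high-guarded blocks, the induction is organised so that an entire \kw{if} with a high guard is treated as a single block: by Lemma~\ref{behav} it maps $\sigma$ to $\sigma\sqcup\mathit{update}_{H}(s_1,s_2)$, leaves $\pcs$ unchanged and emits no low output, hence $o_1$ and $o_2$ re-synchronise afterwards with low-equivalent states; the induction's invariant is, informally, that the two objects either agree on $(s,\pcs)$ with low-equivalent $\sigma$ and equal low traces, or differ only because one is inside a high-guarded conditional or has processed extra high-tagged methods, or because $o_2$ is idle/blocked waiting for input it has not yet received.

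For every rule other than \textbf{start}, $o_1$'s last step is deterministic and independent of the message queues (Lemma~\ref{local-deter}), so I would have $o_2$ fire the same rule. When the step runs at $\pc=H$ (the control-flow steps \textbf{if-high} and \textbf{endif} must be mirrored on both sides to keep the $\pcs$ stacks equal) Lemma~\ref{behav} gives that only high data is updated and no low output is emitted, so low-equivalence and $\mathit{low}(\alpha_2)=\mathit{low}(\alpha_1)$ are preserved; when the context is low, Lemma~\ref{low-to-low} (low-to-low determinism) gives that the low part of the new state and any low output depend only on the low part of the prestate, so $\sigma_1'\approx_{L}\sigma_2'$ and $o_2$ emits the matching low output. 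For \textbf{return}, \textbf{call'} and \textbf{new} the emitted message is low precisely when its tag is low, in which case the two messages are low equivalent; when the tag is high it contributes nothing to $\mathit{low}(\cdot)$ and, by Theorems~\ref{sec-mes} and~\ref{sec-fut}, the wrappers keep it from reaching any low receiver. When a future is resolved (\textbf{w-fut}) a low return value yields low-equivalent future components on both sides, while a high value causes both futures to be wrapped, so a later get (\textbf{w-get}) returns low-equivalent results in the two executions.

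The one genuinely new ingredient, and the step I expect to be the main obstacle, is the \textbf{start} rule, which is the sole source of non-determinism (objects run at independent speeds, so the order in which messages leave a queue is not fixed). This is exactly where the existential quantifier over $s''$, $\sigma_2'$, $\pcs''$, $\alpha_2$ is discharged: since the two executions are assumed to begin with low-equivalent input message queues and low-equivalent future queries, whenever $o_1$ consumes a low invocation message, $o_2$'s queue contains a low-equivalent one (same method name, low-equivalent arguments), and selecting that message keeps the earlier synchronisation invariant and makes the previous cases apply. A high-tagged invocation message can only be delivered to a high-level object (Theorem~\ref{sec-mes}); the delicate point to check is that $o_1$'s processing of such a method produces no output observable at level $L$ by any low object, so that $\mathit{low}(\alpha_2)$ remains a prefix of $\mathit{low}(\alpha_1)$ even when $o_2$ has not yet received the corresponding message. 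Finally, whenever $o_2$ genuinely lags behind $o_1$ — it is idle with no matching message available, or blocked on a get for a future not yet resolved in its own run — its current configuration produces no low output, which is precisely the $\nRightarrow_{L}$ conclusion. Global non-interference for the entire run-time configuration then follows by combining this per-object statement, together with Lemma~\ref{non-inter-formal} on the synchronised segments, over all objects and futures, using Theorems~\ref{sec-fut} and~\ref{sec-mes} to guarantee that no high datum ever crosses to a low object.
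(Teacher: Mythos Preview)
Your proposal is correct and follows essentially the same approach as the paper: the paper's own proof of this lemma is a single sentence, ``By induction on $\longrightarrow^{*}$ on both executions of the two objects with wrappers and corresponding futures, and Lemmas~\ref{non-inter-formal} and~\ref{behav},'' and your outline is precisely a fleshed-out version of that sketch, invoking the same two lemmas (Lemma~\ref{behav} for the high-context segments, Lemma~\ref{non-inter-formal} for the synchronised low segments) together with the wrapper Theorems~\ref{sec-fut} and~\ref{sec-mes}. Your additional appeals to Lemmas~\ref{local-deter} and~\ref{low-to-low} are already folded into the proof of Lemma~\ref{non-inter-formal} in the paper, so you are simply making explicit what the paper leaves implicit; in particular, your identification of the \textbf{start} rule as the place where the existential is discharged, and of the blocked/idle case as the source of the $\nRightarrow_{L}$ clause, are details the paper does not spell out at all.
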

\begin{proof}
By induction on $\xlongrightarrow{H}$* on both executions of the two objects with wrappers and corresponding futures, and Lemmas~\ref{non-inter-formal} and \ref{behav}. 
\qed \end{proof}
\begin{theorem}(Non-interference)
	Our security model using
	the wrapper mechanism
	guarantees
	global non-interference, and an attacker will only receive low information.
\end{theorem}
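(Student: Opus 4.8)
The plan is to lift the per-object statement of Lemma~\ref{glob-formal} to the whole run-time configuration by induction on the number of global rewrite steps, comparing two executions that start from low-equivalent configurations in the sense of Definition~\ref{glob-def}: the two configurations contain corresponding objects and futures (up to a consistent renaming of object and future identities), each pair of corresponding objects has the same non-idle statement list $\sll$ and the same $\pcs$ stack, and the input messages in the queues together with the queries on futures are low equivalent. First I would fix the invariant threaded through the induction: at every stage the two configurations $R_1, R_2$ decompose into matching projections with $R_1/o \approx R_2/o$, $\sigma_1 \approx_L \sigma_2$ and equal $\pcs$ stacks for corresponding objects, the future components agree up to the renaming and up to low equivalence of their resolved values, and the two streams of low outputs emitted so far are consistent (one a prefix of the other, with the lagging side low-silent).

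Then I would carry out the induction step by cases on which rule fires in the first execution. When the step is an intra-object step of Fig.~\ref{operational-semantics-flowsen} or a future/wrapper step of Fig.~\ref{operational-semantics-wrapper} acting on some object $o$, I would match it in the second execution by appealing to Lemma~\ref{glob-formal} (which already packages Lemmas~\ref{local-deter}, \ref{low-to-low}, \ref{behav} and~\ref{non-inter-formal} and Theorem~\ref{theo-ln}): this preserves low equivalence of $o$'s state, keeps the $\pcs$ stacks equal, and either produces matching low outputs or leaves the emitting object thereafter low-silent ($\nRightarrow_L$), so the invariant on the low-output streams is maintained; the existential quantifier in Lemma~\ref{glob-formal} is exactly what supplies the matching step in the second run despite the non-deterministic queue order. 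When the step is a message delivery or a wrapper-admission rule (\textbf{invc-q}, \textbf{w-invc}, \textbf{invc-w'}, \textbf{w-fut}, \textbf{w-get}, \textbf{fut-get}, \textbf{fut-get'}), I would invoke Theorems~\ref{sec-mes} and~\ref{sec-fut}: a high-tagged invocation message is eaten by the sender's wrapper before it can reach a low destination, and a high future value remains locked inside a wrapper whose \textbf{w-get} rule never releases it to a low object, so every message actually delivered to a low object is low, and by the induction hypothesis the corresponding low messages carry low-equivalent contents in both runs; fresh identities generated by \textbf{new} and \textbf{call} are absorbed into the renaming as they appear.

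Finally I would discharge the ``attacker'' clause. Since an attacker is modeled as a low-level object communicating with the system, Theorem~\ref{sec-mes} guarantees it never receives a high-tagged message and Theorem~\ref{sec-fut} that it never reads a high future value, so every input it consumes is low; by the invariant just maintained these low inputs are identical across the two executions, hence the attacker's observable behaviour is independent of the high inputs --- which is precisely the claimed global non-interference.

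I expect the main obstacle to be the bookkeeping forced by the asymmetric clause $\mathit{low}(\alpha_2) \leq \mathit{low}(\alpha_1)$ of Lemma~\ref{glob-formal}: the two executions may be out of step, so their low-output streams are related only by prefix, and I must show that whenever a strict prefix arises the lagging configuration is genuinely low-silent and that this silence is stable under further steps, so that no pending low output is ever lost and the two streams agree in the limit. Threading a consistent, monotonically extended renaming of object and future identities through the whole induction (so that the ``modulo identities'' correspondence of Definition~\ref{glob-def} is actually witnessed) is the other fiddly point, though it is routine once set up.
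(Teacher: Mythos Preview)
Your proposal is correct and follows essentially the same approach as the paper: the paper's own proof is the single line ``By Lemma~\ref{glob-formal} and Theorems~\ref{sec-fut} and~\ref{sec-mes}'', and your write-up is a faithful and careful unfolding of exactly that argument, invoking the same three results for the same purposes (Lemma~\ref{glob-formal} for the per-object inductive step, Theorems~\ref{sec-fut} and~\ref{sec-mes} for the wrapper/delivery cases and the attacker clause). Your identification of the prefix-ordering and identity-renaming bookkeeping as the fiddly parts is accurate and goes beyond what the paper spells out.
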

\begin{proof}
	%
	By Lemma~\ref{glob-formal} and Theorems~\ref{sec-fut} and~\ref{sec-mes}.
	\qed\end{proof}
\end{document}
\endinput